\newtheorem{theorem}{Theorem}
\newtheorem{lemma}{Lemma}
\newtheorem{cor}{Corollary}
\newtheorem{defi}{Definition}
\newtheorem{prop}{Proposition}
\numberwithin{equation}{section}
\newcommand{\R}{\mathbb{R}}
\newcommand{\rank}{\operatorname{rank}}
\newcommand{\minimize}{\mbox{minimize}}
\newcommand{\st}{\mbox{subject to}}
\newcommand{\cA}{\mathcal{A}}
\begin{document}

\title{A Unique ``Nonnegative'' Solution to an Underdetermined System: from Vectors to Matrices}
\author{Meng Wang \ \ \ \ \ Weiyu Xu \ \ \ \ \ Ao Tang
\thanks{The authors are with School of Electrical and Computer
Engineering, Cornell University, Ithaca, NY, 14853. \{mw467, wx42,
at422\}@cornell.edu. }}

\maketitle 

\begin{abstract}

This paper investigates the uniqueness of a nonnegative vector
solution and the uniqueness of a positive semidefinite matrix
solution to underdetermined linear systems. A vector solution is the unique solution to an underdetermined linear system only if the measurement matrix has a row-span intersecting the positive
orthant. 
Focusing on two types of binary measurement matrices, Bernoulli 0-1
matrices and adjacency matrices of general expander graphs, we show
that, in both cases, the support size of a unique nonnegative solution can grow
linearly, namely $O(n)$, with the problem dimension $n$.
We also provide closed-form characterizations of the ratio of this
support size to the signal dimension.
For the matrix case, we show that under a necessary and sufficient
condition for the linear compressed observations operator, there
will be a unique positive semidefinite matrix solution to the
compressed linear observations. 
We further show that a randomly generated Gaussian linear compressed
observations operator will satisfy this condition with overwhelmingly high probability.

\end{abstract}

\section{Introduction}

This paper is devoted to recover a ``nonnegative'' decision variable
from an underdetermined system of linear equations. When the
decision variable is a vector, ``nonnegativity'' means each entry is
nonnegative. When the decision variable is a matrix,
``nonnegativity'' indicates that the matrix is positive
semidefinite. The problem is ill-conditioned in general, however, we
can correctly recover the vector or the matrix if the vector is
sparse, or the matrix is low rank.

 Finding the sparest vector among vectors satisfying a set of linear
 equations is NP-hard. One frequently used heuristic is \textit{$L_1$-minimization}, which returns the vector with the least $L_1$ norm.
Recently, there has been an explosion of research on this topic, see e.g.,
\cite{Can06,Bar07,Don06,CaT06,CaT05}. 
\cite{CaT05} gives a sufficient condition known as Restricted
Isometry Property (RIP) on the measurement matrix that guarantees
the recovery of the sparest vector via $L_1$ minimization. In many
interesting cases, the vector is known to be nonnegative.
\cite{DoT05} gives a necessary and sufficient condition known as the
outwardly neighborliness property of the measurement matrix for
$L_1$ minimization to successfully recover a sparse non-negative
vector.
Moreover, recent studies \cite{BEZ08,DoT08,KDXH09} suggested that a
sparse solution could be the unique nonnegative solution there. This
certainly leads to potentially better alternatives to $L_1$
minimization as in this case any optimization problem over this
constraint set can recover the solution. 

Motivated by networking inference problems such as network
tomography
, we are particularly interested in systems where the measurement
matrix is a 0-1 matrix. There have not been many existing results on
this type of systems except a few very recent papers
\cite{XuH07,BGIKS08,BeI08,KDXH09}. We focus on two types of binary
matrices, Bernoulli 0-1 matrices and adjacency matrices of
expanders, and provides conditions under which a sparse vector
is the unique nonnegative solution to the underdetermined system.
For random Bernoulli measurement matrices, we prove that, as long as the number of equations divided by the number of variables remains constant as the problem dimension grows, with overwhelming probability over the choices of matrices, a sparse
nonnegative vector is a unique nonnegative solution provided that
its support size is at most proportional to its dimension for some
positive ratio. For general expander matrices, we further provide a
closed-form constant ratio of support size to dimension under which
a nonnegative vector is the unique solution.

The phenomenon that an underdetermined system admits a unique
``nonnegative'' solution is not restricted for the vector case.
Finding the minimum rank matrix among all matrices satisfying
given linear equations is a \emph{rank minimization}
problem. 
Among the rank minimization problems, one particularly important class is the rank minimization problem for positive semidefinite matrices under compressed observations. For example, minimizing the rank of a covariance matrix, which is a positive semidefinite matrix, arises in statistics, econometrics, signal processing and many other fields where second-order statistics for random processes are used \cite{FazelThesis}.  
A positive semidefinite matrix is special in that its eigenvalues
(also its singular values) are nonnegative.  In fact, the nuclear
norm minimization heuristic for general matrices was preceded by the
trace norm heuristic for positive symmetric matrices in rank
minimization problems.  While the general analytic frameworks and
computational techniques, for example, \cite{Recht07, RechtCDC08},
are applicable to the rank minimization problems for positive
semidefinite matrices, the special properties of positive
semidefinite matrices may open the way to new structures and new
analysis, which more efficient computational techniques may exploit
to provide faster matrix recovery.

Parallel to the influence of the nonnegative constraint on a vector
variable, the positive semidefinite constraint on a matrix variable
may dramatically reduce the size of the feasible set in rank
minimization problems. In particular, we show that under a necessary
and sufficient condition for the linear compressed observations
operator, there will be a unique positive semidefinite matrix
solution to compressed linear observations. We further show 
that a randomly generated Gaussian linear compressed observations
operator will satisfy this necessary and sufficient condition with
overwhelmingly high probability. This result is akin to the one in
the vector case for the unique nonnegative solution,
but the transition from a nonnegative vector to a positive
semidefinite matrix requires very different analytical approaches.

This paper is organized as follows. Section \ref{sec:vector}
discusses the 
phenomena that a sparse vector can be the unique nonnegative vector
satisfying an underdetermined linear system. 
Focusing on 0-1 matrices, we prove that a sparse vector is a unique
nonnegative solution as long as its support size is at most
proportional to the dimension for some positive ratio.
We further give a closed-form ratio of the support size and the
dimension if the matrix is an adjacent matrix of an expander graph.
Section \ref{sec:matrix} shows a low-rank matrix can be the unique
positive semidefinite matrix satisfying compressed linear
measurements. We provide a necessary and sufficient condition for
this phenomenon to happen and prove the existence of compressed
measurements satisfying the proposed condition. 
Numerical examples
are discussed in Section \ref{sec:simulation} and Section
\ref{sec:conclusion} concludes the paper.


\section{Unique Nonnegative Vector to an Underdetermined
System}\label{sec:vector}
How to recover a vector $x \in \R ^n$ from the measurement $y=Ax \in
\R^m$, where $A^{m \times n}$($m<n$) is the measurement matrix? In
many applications, $x$ is nonnegative, which is our main focus here.
In general, the task seems impossible as we have fewer measurements
than variables. However, if $x$ is sparse, it can be recovered by
solving the following problem,
\begin{eqnarray}\label{eqn:l0}
 \min  \|x\|_0 &&
 \textrm{s.t. }  Ax=y, x \geq 0 ,
\end{eqnarray}
where the $L_0$ norm $\|\cdot\|_0$ measures the number of nonzero
entries of a given vector. Since (\ref{eqn:l0}) in general is
NP-hard, people solve an alternative convex problem by replacing
$L_0$ norm with $L_1$ norm where $\|x\|_1= \sum_i |x_i|$.
The $L_1$ minimization problem can be formulated as follows:
\begin{eqnarray}\label{eqn:l1}
 \min  \mathbf{1}^Tx &&
 \textrm{s.t. } Ax=y, x \geq 0.
\end{eqnarray}

In fact, for a certain class of matrices, if $x$ is sufficiently
sparse, not only can we recover $x$ from (\ref{eqn:l1})
, but also $x$ is the \emph{only} solution to $\{x ~|~ Ax= y, x \geq 0\}$.
In other words, $\{x ~|~ Ax= y, x \geq 0\}$ is a singleton, and $x$
can possibly be recovered by techniques other
than $L_1$ minimization. 
%


\cite{BEZ08} analyzed the singleton property of matrices with a
row-span intersecting the positive orthant. Here we first show only
these matrices can possibly have the singleton property.

\begin{defi}[\cite{BEZ08}]  $A$ has a row-span intersecting the positive orthant, denoted by $A \in
\mathbf{M}^+$, if there exists a vector $\beta>0$ in the row space
of $A$, i.e. $\exists h $ such that
\begin{equation}\nonumber
h^T A= \beta^T >0.
\end{equation}
\end{defi}

There is a simple observation regarding matrices in $\mathbf{M}^+$.

\begin{lemma} \label{lem:M} Let $a_i \in \mathbb{R}^m$ $(i=1,2,...,n)$ be the
$i^{\textrm{th}}$ column of matrix $A$, then $A \in \mathbf{M}^+$ if
and only if $0 \notin P$, where

\begin{equation}\label{eqn:polytope} \nonumber
P \triangleq\textbf{Conv}(a_1,a_2,...,a_n)=\{\sum_{i}
\lambda_i a_i | \mathbf{1}^T \lambda=1, \lambda \geq 0 \}
\end{equation}
\end{lemma}
\begin{proof} If $A \in \mathbf{M}^+$, then there exists $h$
such that $h^T A= \beta^T >0$. Suppose we also have $ 0 \in P$, then
there exists $\lambda \geq 0$ such that $A\lambda=0$ and
$\mathbf{1}^T \lambda=1$. Then $(h^T A) \lambda= \beta^T \lambda >0$
as $\beta>0$, $\lambda \geq 0$ and $\lambda \neq 0$. But $(h^T A)
\lambda=h^T (A \lambda)=0$ as $A\lambda=0$. Contradiction! Therefore
$ 0 \notin P$.

Conversely, if $ 0 \notin P$, there exists a separating hyperplane
$\{x~|~ h^Tx+b=0, h \neq 0\}$ that strictly separates $0$ and $P$.
We assume without loss of generality that $h^T0+b <0$ and $h^Tx
+b>0$ for any point $x$ in $P$. Then $h^T a_i > -b> 0, \forall i$.
Thus we conclude $h^T A
>0$.

\end{proof}

The next theorem states a necessary condition on 
matrix $A$ for $\{x ~|~ Ax= Ax_0, x
\geq 0\}$ to be a singleton. 

\begin{theorem}\label{thm:general} If $\{x ~|~ Ax= Ax_0, x \geq
0\}$ is a singleton for some $x_0 \geq 0$, then $A \in
\mathbf{M}^+$.
\end{theorem}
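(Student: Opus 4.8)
The plan is to prove the contrapositive: I will assume $A \notin \mathbf{M}^+$ and show that $\{x ~|~ Ax = Ax_0, x \geq 0\}$ fails to be a singleton for \emph{every} $x_0 \geq 0$. The intuition is that the singleton property forbids any nonnegative perturbation of $x_0$ that stays in the null space of $A$, so it suffices to exhibit such a perturbation whenever $A \notin \mathbf{M}^+$. First I would record that $x_0$ itself always lies in the feasible set (it trivially satisfies $Ax_0 = Ax_0$ and $x_0 \geq 0$), so if the set is a singleton it must equal $\{x_0\}$; producing one more feasible point distinct from $x_0$ is then enough to reach a contradiction.

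Next I would invoke Lemma \ref{lem:M} to translate the hypothesis $A \notin \mathbf{M}^+$ into the geometric statement $0 \in P$, where $P = \textbf{Conv}(a_1,\dots,a_n)$. By definition of $P$, this yields a vector $\lambda$ with $\lambda \geq 0$, $\mathbf{1}^T\lambda = 1$, and $A\lambda = \sum_i \lambda_i a_i = 0$. The normalization $\mathbf{1}^T\lambda = 1$ is the crucial detail: it guarantees $\lambda \neq 0$, i.e.\ that the nonnegative null vector I have found is genuinely nonzero rather than the degenerate point $\lambda = 0$.

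I would then construct the competing feasible point $x = x_0 + t\lambda$ for any $t > 0$. By linearity, $Ax = Ax_0 + t A\lambda = Ax_0$, and since $x_0 \geq 0$ and $t\lambda \geq 0$ we have $x \geq 0$, so $x$ lies in $\{x ~|~ Ax = Ax_0, x \geq 0\}$. Because $t\lambda \neq 0$, we have $x \neq x_0$, so the feasible set contains at least two distinct points and cannot be a singleton. This contradicts the hypothesis, completing the contrapositive and hence the theorem.

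I do not expect a serious obstacle here, since the argument is essentially a one-line construction once Lemma \ref{lem:M} is in hand; the only point requiring care is ensuring the exhibited null-space direction $\lambda$ is both nonnegative and nonzero, which is exactly what the convex-combination normalization in the definition of $P$ supplies. The conceptual content lies entirely in recognizing that $0 \in P$ is equivalent, via Lemma \ref{lem:M}, to the existence of a nonnegative nontrivial vector in $\ker A$, and that such a vector is precisely the freedom that destroys uniqueness.
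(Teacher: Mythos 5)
Your proposal is correct and follows essentially the same route as the paper's own proof: contraposition via Lemma \ref{lem:M} to obtain a nonnegative, nonzero null-space vector $\lambda$ (the paper calls it $w$) with $\mathbf{1}^T\lambda = 1$, followed by the perturbation $x_0 + t\lambda$ to exhibit a second feasible point. The only differences are notational, so there is nothing to add.
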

\begin{proof} Suppose $A \notin \mathbf{M}^+$, from Lemma \ref{lem:M} we know $0 \in
\textbf{Conv}(a_1,a_2,...,a_n)$. Then there exists a vector $w \geq
0$ such that $Aw=0$ and $\mathbf{1}^Tw=1$. Clearly $w \in
\textbf{Null}(A)$ and $w \neq 0$. Then for any $\gamma>0$ we have
$A(x_0+\gamma w)=Ax_0+\gamma Aw= Ax_0$, and $x_0+\gamma w \geq 0$
provided $x_0 \geq 0$. Hence $x_0+\gamma w \in \{x ~|~ Ax= Ax_0, x
\geq 0\}$.

\end{proof}

Theorem \ref{thm:general} shows that $A$ belongs to $\mathbf{M}^+$
is a necessary condition for an underdetermined system to admit a unique nonnegative vector. 
If $A^{m \times n}$ is a random matrix such that every entry is
independently sampled from Gaussian distribution with zero mean,
then the probability that $0$ lies in the convex hull of the column
vectors of $A$, or equivalently $\{x ~|~ Ax= Ax_0, x \geq 0\}$ is
not a singleton for any $x_0 \geq 0$, is $1-2^{-n+1}\sum
\limits_{k=0}^{m-1} {{n-1}\choose{k}}$(\cite{Wendel62}), which goes
to 1 asymptotically as $n$ increases if $\lim \limits_{n \rightarrow
+ \infty}
\frac{m}{n}<\frac{1}{2}$. 
Thus, if $\lim \limits_{n \rightarrow + \infty}
\frac{m}{n}<\frac{1}{2}$, then for a random Gaussian matrix $A$,
$\{x ~|~ Ax= Ax_0, x \geq 0\}$ would not be a singleton with
overwhelming probability no matter how sparse $x_0$ is. This phenomenon is also characterized in \cite{DoT08}. 

The property that $\{x ~|~ Ax= Ax_0, x \geq 0\}$ is a singleton can
also be characterized in both high-dimensional geometry \cite{DoT08}
and the null
space property of $A$ \cite{KDXH09}. We state two necessary and sufficient conditions 
in Theorem \ref{thm:eqv}.

\begin{theorem} [\cite{DoT08,KDXH09}]
\label{thm:eqv} 
The following three properties of $A^{m\times n}$ are equivalent:
\begin{itemize}
\item For any nonnegative vector $x_0$ with a support size no greater than $k$, the set $\{x ~|~ Ax= Ax_0, x \geq
0\}$ is a singleton.
\item The polytope $P$ defined in (\ref{eqn:polytope}) has $n$ vertices and is $k$-neighborly.
\item For any $w \neq 0$ in the null space of $A$, both the positive support and the negative support of $w$ have a size of at least $k+1$.
\end{itemize}
\end{theorem}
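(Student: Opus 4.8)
The plan is to treat the third property, the null space condition, as the hub and to prove its equivalence with each of the other two separately. I would first show that the singleton property and the null space property are equivalent by a direct perturbation argument, and then connect the neighborliness of $P$ to the null space property through the supporting-hyperplane description of faces together with a theorem of the alternative.

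For the singleton $\Leftrightarrow$ null space step, fix a set $S$ with $|S|\le k$ and let $x_0\ge 0$ have support exactly $S$, so every coordinate of $x_0$ off $S$ vanishes and every coordinate on $S$ is strictly positive. The set $\{x~|~Ax=Ax_0,\ x\ge 0\}$ contains a point other than $x_0$ precisely when some nonzero $w\in\textbf{Null}(A)$ gives a feasible direction, i.e. $x_0+tw\ge 0$ for some small $t>0$; because the coordinates of $x_0$ on $S$ are strictly positive, feasibility of a small step is governed only by the coordinates off $S$, so such a $w$ exists iff there is a nonzero null vector that is nonnegative on $S^c$, equivalently $\operatorname{supp}(w^-)\subseteq S$. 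Letting $S$ range over all sets of size at most $k$ and applying the same reasoning to $-w$, the singleton property for every $k$-sparse nonnegative $x_0$ is equivalent to the statement that no nonzero null vector has its negative support, or its positive support, contained in a set of size $k$, that is, both supports have size at least $k+1$. I expect no real difficulty here.

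For the neighborliness $\Leftrightarrow$ null space step, I would use that $\{a_i~|~i\in S\}$ spans a face of $P$ iff there is an affine functional $x\mapsto h^Tx-c$ vanishing on those columns and strictly negative on the remaining ones. Writing $\bar a_i=(a_i,1)$ and $\bar A=[A;\mathbf{1}^T]$, this is the solvability of a mixed system of one family of equalities and one family of strict inequalities over the lifted columns, whose Gordan/Motzkin alternative is the existence of a vector in $\textbf{Null}(\bar A)$ that is nonnegative and nonzero on $S^c$. Matching alternatives over all $S$ with $|S|\le k$ then turns ``$P$ is $k$-neighborly and has $n$ vertices'' into exactly the balanced-support condition, now read on $\textbf{Null}(\bar A)$: I would prove the two directions by extracting a supporting functional for each admissible $S$ from the absence of a bad lifted null vector, and conversely manufacturing such a null vector whenever some $S$ of size at most $k$ fails to index a face. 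The hypothesis that $P$ has $n$ vertices is used precisely in the base case $|S|=1$, ruling out that any column is a convex combination of the others.

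The step I expect to be the main obstacle is reconciling these two computations, since the face description of $P=\textbf{Conv}(a_1,\dots,a_n)$ is affine and therefore lives naturally over the lift $\bar A=[A;\mathbf{1}^T]$ (equivalently over the cone generated by the columns, after normalizing them onto a hyperplane using a vector $h$ with $h^TA>0$ furnished by $A\in\mathbf{M}^+$), whereas the singleton and null space conditions are stated for $A$ itself. The care lies in verifying that appending the $\mathbf{1}^T$ row does not alter the relevant support counts of null vectors up to size $k$, so that the convex-hull face picture and the conic uniqueness picture genuinely coincide; once this normalization bookkeeping is pinned down, the three equivalences close up.
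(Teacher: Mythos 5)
The paper itself offers no proof of Theorem \ref{thm:eqv}; it is quoted from \cite{DoT08,KDXH09}, so your proposal can only be judged on its own terms. Your first step, the equivalence of the singleton property with the null-space support condition, is correct: the observation that feasibility of a perturbation $x_0+tw$ is governed only by the coordinates off the support of $x_0$, plus the symmetry $w\mapsto -w$, is exactly the standard argument, and the quantifier bookkeeping over $S$ closes without trouble.

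The gap is in the second step, and it sits precisely at the point you flagged as ``normalization bookkeeping'' --- that obstacle is not bookkeeping, it is fatal to the direction $2\Rightarrow 3$ as literally stated. Your Motzkin alternative correctly characterizes ``$\{a_i:i\in S\}$ spans a face of $P$'' by the absence of a vector in $\textbf{Null}(\bar A)$, $\bar A=[A;\mathbf{1}^T]$, that is nonnegative and nonzero on $S^c$. But property 3 concerns $\textbf{Null}(A)$, and $\textbf{Null}(\bar A)=\textbf{Null}(A)\cap\textbf{Null}(\mathbf{1}^T)$ can be a strict --- even trivial --- subspace of $\textbf{Null}(A)$, so there is no way to manufacture a violating lifted null vector from a bad $w\in\textbf{Null}(A)$ with $\mathbf{1}^Tw\neq 0$. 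Concretely, take $A=[1\;\;2]$ ($m=1<n=2$): then $P=\textbf{Conv}(1,2)=[1,2]$ has $n=2$ vertices and is $1$-neighborly (each endpoint is a face), so property 2 holds with $k=1$; yet $w=(2,-1)^T$ spans $\textbf{Null}(A)$ and has positive and negative supports of size $1$, and indeed $\{x\geq 0:\ x_1+2x_2=1\}$ is a segment, so properties 1 and 3 fail. Here $\bar A$ is invertible, so your lifted null space is $\{0\}$ and detects nothing; note also $A\in\mathbf{M}^+$ (take $h=1$), so invoking $\mathbf{M}^+$ does not rescue the step, and your proposed normalization $a_i\mapsto a_i/\beta_i$ changes the polytope and its face lattice (here it collapses both columns to the single point $1$), so it proves a correct statement about the cone $A\mathbb{R}^n_+$ rather than the stated one about $P$. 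The equivalence only becomes true under an extra hypothesis the paper leaves implicit: e.g.\ $\mathbf{1}^T$ in the row space of $A$, so that $\textbf{Null}(A)=\textbf{Null}(\bar A)$ (which does hold for the paper's Bernoulli matrices with the appended all-ones row and for constant-degree expanders), or after restating property 2 for $\textbf{Conv}(0,a_1,\dots,a_n)$ with outward neighborliness as in \cite{DoT05}, or for the cone as in \cite{DoT08}. The implications that survive your argument unchanged --- $1\Leftrightarrow 3$, and $3\Rightarrow 2$ via $\textbf{Null}(\bar A)\subseteq\textbf{Null}(A)$ --- happen to be the only ones the paper actually uses, but they do not constitute the claimed three-way equivalence.
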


Note that a polytope $P$ is $k$-neighborly if every set of $k$
vertices spans a face $F$ of $P$. $F$ is a face of $P$ if there
exists a vector $\alpha_F$ such that $\alpha_F^T x=c, \forall x \in
F$, and $\alpha_F^T x<c, \forall x \notin F$ and $x \in P$.

\cite{DoT08} (Corollary 4.1) shows that there exists a 
special partial Fourier matrix $\Omega$ with $2p+1$ rows 
such that $\{x ~|~ \Omega x= \Omega x_0, x \geq 0\}$ is a singleton
for every nonnegative $p$-sparse signal $x_0$. Here we will show the
result is the ``best'' we can hope for in the sense that a matrix
$A$ should have at least $2p+1$ rows if $\{x ~|~ Ax= Ax_0, x \geq
0\}$ is a singleton for every nonnegative $p$-sparse signal $x_0$.

\begin{prop}
For a matrix $A^{m\times n}$ $(m<n)$, if $\{x ~|~ Ax= Ax_0, x \geq
0\}$ is a singleton for any nonnegative $p$-sparse signal $x_0$,
then $m \geq 2p+1$.
\end{prop}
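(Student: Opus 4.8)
The plan is to invoke the third characterization in Theorem~\ref{thm:eqv} to translate the singleton hypothesis into a statement about $\textbf{Null}(A)$, and then to contradict that statement by exhibiting a null vector with small support whenever $m$ is too small. Concretely, specializing Theorem~\ref{thm:eqv} to $k=p$, the assumption that $\{x ~|~ Ax = Ax_0, x\geq 0\}$ is a singleton for every nonnegative $p$-sparse $x_0$ is \emph{equivalent} to the property that every nonzero $w \in \textbf{Null}(A)$ has both its positive support and its negative support of size at least $p+1$. In particular, every such $w$ must satisfy $\|w\|_0 \geq 2(p+1) = 2p+2$, since the total support is the sum of the positive and negative support sizes.

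I would then argue by contradiction, supposing $m \leq 2p$. The key step is to produce a nonzero null vector whose total support is too small to meet the bound above. Because $m<n$, we may select any $m+1$ columns $a_{i}$ of $A$; as these are $m+1$ vectors lying in $\R^{m}$, they are linearly dependent, so there is a nontrivial relation $\sum_{i\in S} c_i a_i = 0$ with $|S|=m+1$ and the $c_i$ not all zero. Placing these coefficients into a vector $w$ (with $w_i=c_i$ for $i\in S$ and $w_i=0$ otherwise) yields $w\in\textbf{Null}(A)$, $w\neq 0$, and $\|w\|_0 \leq m+1 \leq 2p+1$.

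This immediately collides with the null-space requirement: the equivalence forces $\|w\|_0 \geq 2p+2$ for every nonzero null vector, whereas the $w$ just constructed has $\|w\|_0 \leq 2p+1 < 2p+2$. Hence the assumption $m\leq 2p$ is impossible, and we conclude $m\geq 2p+1$. The argument is short and I do not expect a serious obstacle; the only point needing care is the construction of the sparse null vector — ensuring $n\geq m+1$ so that $m+1$ columns can be chosen (which is exactly guaranteed by $m<n$) and checking that the dependence relation is genuinely nontrivial, so that $w\neq 0$. Everything else is routine bookkeeping with support sizes.
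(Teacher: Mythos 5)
Your proof is correct, but it takes a different route from the paper's. Both arguments begin identically, by extracting a nontrivial dependence among $m+1$ columns of $A$ (possible since $m<n$) to obtain a nonzero $w\in\textbf{Null}(A)$ with $\|w\|_0\leq m+1$. The divergence is in how the contradiction is reached: you invoke the \emph{third} (null-space) characterization in Theorem~\ref{thm:eqv} with $k=p$, so that every nonzero null vector must have positive and negative supports each of size at least $p+1$, hence total support at least $2p+2$, which is immediately incompatible with $\|w\|_0\leq m+1\leq 2p+1$ when $m\leq 2p$. The paper instead uses the \emph{second} (neighborliness) characterization: it first appeals to Theorem~\ref{thm:general} to get $A\in\mathbf{M}^+$, uses the resulting strictly positive vector $\beta$ in the row span to show the dependence coefficients $\lambda$ must carry both signs, and then constructs a face-supporting hyperplane for a $p$-element index set containing the positive support $I_p$ but avoiding $I_n$, deriving a contradiction by pairing inner products across the rearranged dependence relation. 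Your approach buys brevity and transparency --- it needs no appeal to $\mathbf{M}^+$, no sign analysis of $\lambda$, and no polytope geometry, just a support count --- while the paper's argument exercises the geometric ($k$-neighborly polytope) machinery and the $\mathbf{M}^+$ result it developed earlier in the section. Since both characterizations sit in the same cited theorem, neither proof is more self-contained than the other; yours is simply the more economical instantiation.
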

\begin{proof}
Pick the first $m+1$ columns of $A$, denoted by $a_1, a_2,...,
a_{m+1} \in \mathbb{R}^m$. Then the equations
\begin{equation}\label{eqn:best}
\sum_{i=1}^{m+1} \lambda_i a_i=0
\end{equation}
have $m$ equations and $m+1$ variables $\lambda_1,
\lambda_2,...,\lambda_{m+1}$, and have a non-zero solution.

From Theorem \ref{thm:general} we know that $A$ must belong to
$\mathbf{M}^+$, i.e. there exists $h$ such that $h^T A =\beta^T
>0$. Taking the inner product of both sides of (\ref{eqn:best}) with $h$,
we have
\begin{equation}\label{eqn:bestlambda}
\sum_{i=1}^{m+1}\beta_i \lambda_i=0.
\end{equation}

Since $\beta >0$, from (\ref{eqn:bestlambda}) we know $\lambda$
should have both positive and negative terms. Collecting positive
and negative terms of $\lambda$ separatively, we can rewrite
(\ref{eqn:best}) as follows,
\begin{equation}\label{eqn:reorder}
\sum_{i \in I_p} \lambda_i a_i= -\sum_{i \in I_n} \lambda_i a_i,
\end{equation}
where $I_p$ is the set of indices of positive terms of $\lambda$ and
$I_n$ is the set of indices of negative terms. Note that
$|I_p|+|I_n| \leq m+1$. We also have $\sum_{i \in I_p} \lambda_i= -
\sum_{i \in I_n} \lambda_i \triangleq r >0$ from
(\ref{eqn:bestlambda}).

Suppose $m \leq 2p$, then $|I_p|+|I_n|\leq m+1 \leq 2p+1$, we assume
without loss of generality that $|I_p| \leq p$. Since $\{x ~|~ Ax=
Ax_0, x \geq 0\}$ is a singleton for every nonnegative $p$-sparse
signal $x_0$, then from Theorem \ref{thm:eqv} 
$\textbf{Conv}(a_1,a_2,...,a_n)$ is $p$-neighborly, which implies
that for any index set $I$ with $|I|=p$, there exists $\eta$ such
that $\eta^T a_i =c$ for any $i \in I$, and $\eta^T a_i<c$ for all
$i \notin I$. We consider specifically an index set $I$, which
contains $I_p$ but does not contain $I_n$, and its corresponding
vector $\eta$. Taking the inner product of both sides of
(\ref{eqn:reorder}) with $\eta$, we would get $rc$ on the left and
some value strictly greater than $rc$ on the right, and reach a
contradiction.
\end{proof}

Sparse recovery problems appear in different fields. Specific
problem setup may impose further constraints on the measurement
matrix. We are particularly interested in network inference
problems, in which the measurement matrix is a 0-1 routing matrix.
Network inference problems attempt to extract individual parameters
based on aggregate measurements in networks. 
There has been active research in this area including a wide
spectrum of approaches ranging from theoretical reasoning to
empirical measurements \cite{CHNY02,NgT06,Duf06,ZRLD03,NgT07}. 
Since the measurement matrices in network inference problems are 0-1
matrices, the instances when $A$ is a 0-1 matrix are our main focus.
Section \ref{sec:bernoulli} and \ref{sec:binary} prove that a sparse
vector can be the unique nonnegative vector satisfying compressed
linear measurements if the measurement matrix is a random Bernoulli
matrix or an adjacency matrix of an expander graph. 
Moreover, the support size of the sparse vector can be proportional
to the dimension, in other words, the support size of the unique
nonnegative vector is $O(n)$ where $n$ is the dimension, while the
provable support size for uniqueness property in \cite{BEZ08} is
$O(\sqrt {n})$. Besides, for any $\theta \triangleq \lim \limits_{n
\rightarrow +\infty} \frac{m}{n} >0$, the support size of a sparse
vector that is a unique nonnegative solution can always be $O(n)$,
while for Gaussian measurement matrices, with high probability, $\{x
~|~ Ax= Ax_0, x \geq 0\}$ would not be a singleton for any
nonnegative $x_0$ (with linearly growing sparsity) if
$\theta<\frac{1}{2}$ \cite{DoT08}. This also shows the fundamental
difference between 0-1 measurement matrices and well studied
Gaussian random measurement matrices.

\subsection{Uniqueness with 0-1 Bernoulli
Matrices}\label{sec:bernoulli}

First we consider the uniqueness property with dense 0-1 Bernoulli
matrix. The measurement matrix $A$ is an $(m+1) \times n$
measurement matrix, with each element in the first $m$ rows of $A$
being i.i.d. Bernoulli random variables, taking values `0' with
probability $\frac{1}{2}$ and taking values `1' with probability
$\frac{1}{2}$. The last row of $A$ is a $1\times n$ all `1' vector.
We also assume the fraction ratio $\frac{m}{n}$ is a constant
$\theta$ as the dimension $n$ grows. It turns out that as $n$ goes
to infinity, with overwhelming probability there exists a constant
$\gamma>0$ such that $\{x ~|~ Ax= Ax_0, x \geq 0\}$ is a singleton
for any nonnegative $(\gamma n-1)$-sparse signal $x_0$. To see this,
we first present the following theorem:

\begin{theorem}\label{thm:bernoulli}
For any $\theta>0$, there exists a constant $\gamma >0$ such that,
with overwhelmingly high probability as $n \rightarrow \infty$, any
nonzero vector $w$ in the null space of the measurement $A$
mentioned above has at least $\gamma n$ negative and at least $\gamma n$ positive
elements.
\end{theorem}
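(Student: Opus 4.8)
The plan is to prove the contrapositive in the form of a null space property and control the "bad" event by a union bound over sign patterns, after first removing the continuum of candidate vectors $w$. First I would use the symmetry of the null space: if $w\in\mathrm{Null}(A)$ then $-w\in\mathrm{Null}(A)$, and this swaps the positive and negative supports, so it suffices to bound the probability that some nonzero null vector has fewer than $\gamma n$ positive entries. The crucial structural step is to exploit the all-ones last row: $Aw=0$ forces $\mathbf 1^Tw=0$, so the positive part and negative part of $w$ have equal total mass. Writing $P,Q$ for the positive and negative supports and normalizing this common mass to $1$, the coefficients $w_P$ and $|w_Q|$ become convex weights. The remaining $m$ Bernoulli rows then read $\sum_{i\in P}w_i\tilde a_i=\sum_{i\in Q}|w_i|\tilde a_i$ in $\R^m$, where $\tilde a_i\in\{0,1\}^m$ is the Bernoulli part of the $i$-th column. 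Hence a nonzero null vector with positive support $P$ and negative support $Q$ exists if and only if $\mathbf{Conv}(\{\tilde a_i\}_{i\in P})\cap\mathbf{Conv}(\{\tilde a_i\}_{i\in Q})\neq\emptyset$. Taking the worst case $Q=P^{c}$, it suffices to show that, with overwhelming probability, every column set $P$ with $|P|\le\gamma n$ is \emph{strictly separable} by a hyperplane from the remaining columns. This reformulation is what eliminates the real-valued continuum of $w$: the event now depends only on the discrete set $P$.

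Next I would fix a set $P$ with $|P|=p\le\gamma n$ and bound the probability that the $p$ points $\{\tilde a_i\}_{i\in P}$ fail to be strictly separated from the other $n-p$ points. Since $p\ll m=\theta n$ in the regime of interest, a small cluster of $\{0,1\}$-vertices should be separable from the bulk with high probability; I would quantify this by exhibiting a separating direction $c$ and applying Hoeffding's inequality to the i.i.d. coordinates of the $\tilde a_i$, together with an anti-concentration (Littlewood--Offord type) estimate forbidding exact coincidences across the $m$ independent coordinates. The target is a per-pattern failure probability of order $2^{-c\theta n}$, with the exponential decay coming from the $m$ independent Bernoulli rows. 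I would then union bound over all choices of $P$: the number of such sets is $\sum_{p\le\gamma n}\binom np\le 2^{nH(\gamma)}$, where $H$ is the binary entropy. Choosing $\gamma$ small enough that $H(\gamma)<c\theta$ makes the total probability $2^{nH(\gamma)}\cdot 2^{-c\theta n}\to 0$, which both establishes the theorem and pins down the constant $\gamma=\gamma(\theta)$.

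The hard part will be the per-pattern probability bound, which must hold \emph{uniformly} over all admissible weightings, i.e.\ over the continuum of convex combinations hidden inside each fixed support pattern. Concretely, separability of $P$ from $Q$ is a statement about two whole convex hulls, not two fixed points, so the naive per-vector estimate $\prob[a^Tw=0]\le \tfrac12$ for a single Bernoulli row $a$ does not directly integrate into a union bound. I expect to need a covering/net argument over the weight simplices $\{w_P\}$ and $\{|w_Q|\}$, combined with the anti-concentration estimate showing that a random Bernoulli row takes the single value forcing $a^Tw=0$ with probability bounded away from $1$ even for adversarially sign-imbalanced $w$. Balancing the cardinality of this net against the anti-concentration decay, and then against the entropy factor $\binom{n}{\gamma n}$, is the delicate calculation that determines the admissible $\gamma$; making the three exponential rates compatible is where the main difficulty lies, and it is exactly the step that exploits the $\{0,1\}$ structure (via the fixed all-ones row and the resulting convex-hull reformulation) rather than general position.
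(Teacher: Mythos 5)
Your structural reduction is sound and, in fact, parallels the paper's own Theorem~\ref{thm:eqv}: the symmetry $w\mapsto -w$, the exact mass balance $\mathbf 1^Tw=0$ forced by the all-ones row, the equivalence between a null vector with sign pattern $(P,Q)$ and $\mathbf{Conv}(\{\tilde a_i\}_{i\in P})\cap\mathbf{Conv}(\{\tilde a_i\}_{i\in Q})\neq\emptyset$, and the monotone passage to $Q=P^c$ are all correct. The genuine gap is exactly where you say the ``main difficulty lies'': the per-pattern estimate $\prob\bigl[\mathbf{Conv}(\{\tilde a_i\}_{i\in P})\cap\mathbf{Conv}(\{\tilde a_i\}_{i\in P^c})\neq\emptyset\bigr]\leq 2^{-c\theta n}$ is asserted, not proved, and the tools you name do not straightforwardly yield it. Exhibiting a separating direction plus Hoeffding cannot work as described once $|P|=\gamma n$ grows linearly: the cluster hull is then not a ``small object near the cube vertices'' (its centroid sits at constant distance about $\tfrac12\sqrt{\theta/\gamma}$ from the cube center, whereas the bulk hull reaches out to distance of order $\sqrt{\log n}$ in typical directions), so no explicit direction serves, and Littlewood--Offord anti-concentration ``forbidding exact coincidences'' controls point-to-point collisions, not hull-to-hull intersection. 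When you retreat to nets over the weight simplices, note that the simplex over $P^c$ has dimension roughly $n$ \emph{no matter how small $\gamma$ is}, so the net has cardinality $2^{\Theta(n)}$ and must be beaten by $q^m$ with $q$ the per-row small-ball probability; but $q$ is a constant (e.g.\ $1/2$) for peaky weight vectors, so you are forced into a compressible/incompressible stratification in the style of Rudelson--Vershynin, with the anti-concentration scale coupled to the net resolution and to $\|A\|$. That machinery is essentially a from-scratch proof of a restricted-isometry/null-space property for 0-1 matrices; the proposal neither constructs it nor verifies that your ``three exponential rates'' can actually be made compatible, and this is the entire content of the theorem rather than a deferred calculation.

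For contrast, the paper starts the same way you do --- it uses the all-ones row (via concentration of all column sums) to get a robust mass balance $\|w_{S}\|_1/\|w_{S^c}\|_1\geq\frac{1-\epsilon}{1+\epsilon}$ for the negative support $S$ --- but then sidesteps every union bound over patterns and every net over weights. It observes that $\mathbf{Null}(A)$ is contained in the null space of a $\pm1$ Bernoulli matrix $A'$ obtained by combining the Bernoulli rows with the all-ones row, and invokes the RIP-based null-space lemma of Cand\`es (Lemma~\ref{lem:Can}), which is \emph{already uniform} over all null vectors and all supports of size at most $\gamma n$: $\|w_{S}\|_1\leq\frac{\sqrt 2\,\delta_{2|S|}}{1-\delta_{2|S|}}\|w_{S^c}\|_1$. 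Since $\delta_{2q}$ is small for all $q\leq\gamma n$ with overwhelming probability, this contradicts the mass balance unless $|S|\geq\gamma n$. If you wish to salvage your route, the most economical repair is to import such a uniform RIP/null-space property at the point where you currently plan pattern-by-pattern anti-concentration; as written, the proposal is a program with its central estimate missing.
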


\begin{proof}
Let us consider an arbitrary nonzero vector $w$ in the null space of
$A$. Let $S$ be the support set for the negative elements of $w$ and
let $S^c$ be the support set for the nonnegative elements of $w$. We
now want to argue that, with overwhelmingly high probability, the
cardinality $|S|$ of the set $S$ can not be too small.

From the large deviation principle and a simple union bound, for any
$\epsilon>0$, with overwhelmingly high probability as $n$ goes to
infinity, \emph{simultaneously} for \emph{every} column of the
measurement matrix, the sum of its $(m+1)$ elements will be in the
range $[\frac{1}{2}\theta(1-\epsilon)n,
\frac{1}{2}\theta(1+\epsilon)n]$.

Since $Aw=0$,
\begin{equation*}
A_{S}w_{S}+A_{S^c}w_{S^c}=0,
\end{equation*}
where $A_{S}$, $w_{S}$, $A_{S^c}$, and $w_{S^c}$ are respectively
the part of matrix $A$ and vector $w$ indexed by the sets $S$ and
$S^c$.

Multiplying the $1 \times m$ row vector $[1,1, ..., 1]$ to both
sides of this equation, we get
\begin{equation}
U_{S}w_{S}+U_{S^c}w_{S^c}=0, \label{eq:sumzeroconstraint}
\end{equation}
where $U_{S}$ is an $1 \times |S|$ vector, each component of which
represents the sum of the elements from the corresponding column of
$A_S$; $U_{S^c}$ is an $1 \times |S^c|$ vector, each component of
which represents the sum of the elements from the corresponding
column of $A_{S^c}$.

From the concentration result of the column sums, we know

\begin{equation*}
U_{S}w_{S} \geq  -\frac{1}{2}\theta(1+\epsilon)n \|w_{S}\|_1,
\end{equation*}
and
\begin{equation*}
U_{S^c}w_{S^c} \geq  \frac{1}{2}\theta(1-\epsilon)n \|w_{S^c}\|_1.
\end{equation*}

But combining these two inequalities with
(\ref{eq:sumzeroconstraint}), it follows that

\begin{equation*}
\frac{1}{2}\theta(1-\epsilon)n
\|w_{S^c}\|_1-\frac{1}{2}\theta(1+\epsilon)n \|w_{S}\|_1 \leq 0,
\end{equation*}
which implies

\begin{equation}
\frac{\|w_{S}\|_1}{\|w_{S^c}\|_1} \geq
\frac{1-\epsilon}{1+\epsilon}. \label{eq:ratioconstraint}
\end{equation}

Now we look at the null space of the measurement matrix $A$. First,
notice that the null space of $A$ is a subset of the null space of
the matrix $A'$ comprising of the first $\theta n$ rows of $A$
subtracted by the last row of $A$ (the all `$1$' vector). Then the
matrix $A'$ is a random $\pm 1$ Bernoulli measurement matrix, which
is known to satisfy the restricted isometry condition. Recall one
result about the null space property of a matrix satisfying the
restricted isometry condition:
\begin{lemma}[\cite{Can08}]\label{lem:Can}
Let $h$ be any vector in the null space of $A'$ and let $T_{0}$ be
any set of cardinality $q$. Then

\begin{equation*}
\|h_{T_{0}}\|_1 \leq \frac{\sqrt{2}\delta_{2q}}{1-\delta_{2q}}
\|h_{T_{0}^{c}}\|_1,
\end{equation*}
where $\delta_{2q}$ is the restricted isometry constant for sparse
vectors with support set size no bigger than $2q$, namely,
$\delta_{2q}$ is the smallest positive number such that for any set
$T$ with $|T| \leq 2q$, 
and any vector $y$, 
the following holds:
\begin{equation*}
 \sqrt{m}(1-\delta_{2q}) \|y\|_2 \leq \|A'_{T} y\|_2 \leq \sqrt{m} (1+\delta_{2q})
\|y\|_2.
\end{equation*}

\end{lemma}
Reasoning from Lemma \ref{lem:Can} and (\ref{eq:ratioconstraint}),
after some algebra, we know immediately, for $q=|S|$, $\delta_{2q}$
must satisfy

\begin{equation*}
\delta_{2q} \geq \frac{1-\epsilon}{1-\epsilon+\sqrt{2}(1+\epsilon)}.
\end{equation*}

We also know there exists a $\gamma>0$ such that for any $q\leq
\gamma n$, with overwhelmingly high probability as $n \rightarrow
\infty$,
\begin{equation*}
\delta_{2q} < \frac{1-\epsilon}{1-\epsilon+\sqrt{2}(1+\epsilon)},
\end{equation*}
thus with overwhelmingly high probability as $n \rightarrow \infty$,
the size of the negative support, namely $|S|$, can not be smaller
than $\gamma n$.

Similarly, we have the same conclusion for the cardinality of the
support set of the positive elements for any nonzero vector from the
null space of the matrix $A$.
\end{proof}


Theorem \ref{thm:bernoulli} immediately indicates that $\{x ~|~ Ax=
Ax_0, x \geq 0\}$ is a singleton for all nonnegative $x_0$ that is
$\gamma n-1$ sparse. Thus the support size of the unique nonnegative
vector can be as large as $O(n)$, while the previous result in
\cite{BEZ08} is $O(\sqrt{n})$.

\subsection{Uniqueness with Expander Adjacency Matrices}
\label{sec:binary}

\begin{figure}[t]
      \centering
      \includegraphics[scale=0.4]{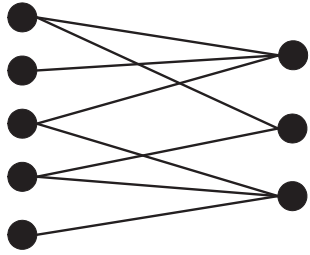}
 \caption{The bipartite graph corresponding to matrix $A$ in (\ref{eqn:routing_example})}
   \label{fig:bipartite}
\end{figure}

Section \ref{sec:bernoulli} discusses the singleton property with
0-1 Bernoulli matrices, here we focus on another type of 0-1
matrices where the matrix $A$ 
is the adjacency matrix of a bipartite expander graph. 
\cite{BeI08,XuH07,KDXH09} studied related problems using expander
graph with constant left degree. We instead employ a general
definition of expander which does not require constant left degree.

Every $m\times n$ binary matrix $A$ is the adjacency matrix of an
unbalanced bipartite graph with $n$ left nodes and $m$ right nodes.
There is an edge between right node $i$ and left node $j$ if and
only if $A_{ij}=1$. Let $d_j$ denote the degree of left node $j$,
and let $d_l$ and $d_u$ be the minimum and maximum of left degrees.
Define $\rho= d_l/d_u$, then $0< \rho \leq 1$. For example, the
bipartite graph in Fig. \ref{fig:bipartite} corresponds to the
matrix $A$ in (\ref{eqn:routing_example}). Here $d_l=1$, $d_u=2$,
and $\rho=0.5$.
\begin{equation}
A  = \left[ {\begin{array}{*{10}c}
   1 & 1 & 1 & 0 &  0   \\
   1 & 0 & 0 & 1 &  0   \\
   0 & 0 & 1 & 1 &  1
\end{array}} \right].
\label{eqn:routing_example}
\end{equation}
\begin{defi}[\cite{LMSS98}] A bipartite graph with $n$ left nodes and $m$ right nodes is an $(\alpha, \delta)$ expander if for any
set $S$ of left nodes of size at most $\alpha n$, the size of the
set of its neighbors $\Gamma(S)$ satisfies $|\Gamma(S)| \geq \delta
|E(S)|$, where $E(S)$ is the set of edges connected to nodes in $S$,
and $\Gamma(S)$ is the set of right nodes connected to $S$.
\end{defi}

Our next main result regarding the singleton property of an
adjacency matrix of a general expander is stated as follows.

\begin{theorem}\label{thm:expander} For an adjacency matrix $A$ of an $(\alpha, \delta)$ expander with left degrees in the range $[d_l, d_u]$, if $\delta \rho  > \frac{\sqrt{5}-1}{2}\approx0.618$, then for any nonnegative $k$-sparse vector $x_0$
with $k \leq \frac{\alpha}{1+\delta \rho}n$, $\{x ~|~ Ax=Ax_0, x\geq
0\}$ is a singleton.
\end{theorem}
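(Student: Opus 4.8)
The plan is to verify the third (null-space) characterization in Theorem~\ref{thm:eqv}: under the stated hypotheses every nonzero $w$ in the null space of $A$ has both its positive support $P$ and its negative support $N$ of size at least $k+1$. Since $-w$ is also a null vector with the roles of $P$ and $N$ exchanged, it suffices to rule out the existence of a nonzero null vector whose negative support $N$ satisfies $|N|\le k$; I assume such a $w$ exists and derive a contradiction. Write $T=N\cup P$ for the full support and let $\Gamma(\cdot)$, $E(\cdot)$ be the right-neighborhood and edge set as in the expander definition.

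The first ingredient is a sign/balance observation at the right nodes. Since $Aw=0$, for every right node $r$ the coordinates of $w$ on the left nodes adjacent to $r$ sum to zero, so a node adjacent to a negative coordinate must also be adjacent to a positive one and conversely. This forces $\Gamma(N)=\Gamma(P)=\Gamma(T)=:\Gamma$. Because the neighborhood of a set is never larger than the number of edges leaving it, and each left node has degree at most $d_u$, this common neighborhood is small: $|\Gamma|\le |E(N)|\le d_u|N|\le d_u k$.

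The second ingredient is a unique-neighbor argument. If $|T|\le\alpha n$, the expander bound $|\Gamma(T)|\ge\delta|E(T)|$ together with $\delta\ge\delta\rho>\tfrac12$ gives $|\Gamma(T)|>\tfrac12|E(T)|$, so some right node $r$ is adjacent to exactly one coordinate $j\in T$ (otherwise every neighbor would carry at least two edges into $T$, forcing $|\Gamma(T)|\le\tfrac12|E(T)|$); the balance equation at $r$ then reads $w_j=0$, contradicting $j\in T$. Hence $|T|\le\alpha n$ is already impossible, and I may assume $|T|>\alpha n$. In that case I pass to a subset $\tilde T\subseteq T$ with $|\tilde T|=\lfloor\alpha n\rfloor\le\alpha n$, so that the expander property applies. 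Using $\Gamma(\tilde T)\subseteq\Gamma(T)=\Gamma$ and $|E(\tilde T)|\ge d_l|\tilde T|$, I obtain $\delta d_l\lfloor\alpha n\rfloor\le|\Gamma(\tilde T)|\le|\Gamma|\le d_u k$, that is, $\lfloor\alpha n\rfloor\le k/(\delta\rho)$.

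Finally I close the loop with the sparsity budget. Substituting $k\le\frac{\alpha}{1+\delta\rho}n$ yields $\lfloor\alpha n\rfloor\le\frac{\alpha n}{\delta\rho(1+\delta\rho)}$, and the hypothesis $\delta\rho>\frac{\sqrt5-1}{2}$ is precisely $(\delta\rho)^2+\delta\rho-1>0$, i.e.\ $\delta\rho(1+\delta\rho)>1$, so the right-hand side is a fixed constant factor strictly below $\alpha n$; for the problem sizes in question this contradicts $\lfloor\alpha n\rfloor$ being essentially $\alpha n$. Either way we reach a contradiction, so $|N|\ge k+1$, and by symmetry $|P|\ge k+1$, which is exactly the null-space condition of Theorem~\ref{thm:eqv}. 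The step I expect to require the most care is the passage to the subset $\tilde T$: the balance observation only controls the neighborhood of the \emph{full} support $T$, so I must apply the expander inequality to a set of size at most $\alpha n$ while still inheriting the smallness $|\Gamma|\le d_u k$ through the inclusion $\Gamma(\tilde T)\subseteq\Gamma$. Verifying that the golden-ratio threshold is the exact boundary at which $k/(\delta\rho)$ drops below $\alpha n$ is the other point that must be checked precisely.
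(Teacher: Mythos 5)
Your proposal is correct in substance and shares the paper's skeleton: the null-space criterion of Theorem~\ref{thm:eqv}, the sign-balance observation at the right nodes forcing $\Gamma(S_-)=\Gamma(S_+)=\Gamma(S_-\cup S_+)$ and hence $|\Gamma(S_-\cup S_+)|\le |E(S_-)|\le d_u k$, and a contradiction obtained by applying the expansion property to a sub-support of admissible size, closed by the inequality $\delta\rho(1+\delta\rho)>1$, which is exactly the golden-ratio hypothesis. Where you differ is the middle of the argument. The paper turns the balance observation into the quantitative bound $|S_+|\ge \frac{|\Gamma(S_+)|}{d_u}=\frac{|\Gamma(S_-)|}{d_u}\ge\delta\rho\,|S_-|$, so the support automatically contains a set $\tilde S$ of size exactly $(1+\delta\rho)s\le\alpha n$, and expansion gives the scale-free contradiction $d_u s\ge|\Gamma(S_-)|\ge|\Gamma(\tilde S)|\ge\delta d_l(1+\delta\rho)s=\delta\rho(1+\delta\rho)d_u s>d_u s$, valid for every $s$ and every $n$. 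You never bound $|S_+|$; instead you split on the size of the full support $T$: for $|T|\le\alpha n$ your unique-neighbor argument (valid since $\delta\ge\delta\rho>\tfrac12$) kills any nonzero null vector outright --- a clean standalone fact that the paper does not isolate, and the reason you can dispense with the $|S_+|$ estimate --- while for $|T|>\alpha n$ you expand a subset of size $\lfloor\alpha n\rfloor$.

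The one deficiency is in that second case, and you flag it yourself. Because you chose $|\tilde T|=\lfloor\alpha n\rfloor$, your final inequality $\lfloor\alpha n\rfloor\le\frac{\alpha n}{\delta\rho(1+\delta\rho)}$ is a contradiction only when $\bigl(1-\frac{1}{\delta\rho(1+\delta\rho)}\bigr)\alpha n>1$, i.e.\ for $n$ beyond a constant that blows up as $\delta\rho$ approaches the threshold $\frac{\sqrt{5}-1}{2}$; the theorem as stated is not asymptotic. The fix stays entirely within your framework: in the case $|T|>\alpha n$, expand instead a subset $\tilde T\subseteq T$ of size $(1+\delta\rho)k\le\alpha n$ (it exists precisely because $|T|>\alpha n$); then $\delta d_l(1+\delta\rho)k\le|\Gamma(\tilde T)|\le|\Gamma|\le d_u k$ forces $\delta\rho(1+\delta\rho)\le 1$, a scale-free contradiction parallel to the paper's, with no floor functions and no largeness assumption on $n$ --- modulo the same integer-rounding slop the paper itself commits when it writes $|\tilde S|=(1+\delta\rho)s$.
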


\begin{proof}
From Theorem \ref{thm:eqv}, in order to prove that $\{x ~|~Ax=Ax_0,
x\geq 0\}$ is a singleton for any nonnegative
$\frac{\alpha}{1+\delta \rho}n$-sparse vector $x_0$, we only need to
argue that for any nonzero $w$ such that $Aw=0$, we have $|S_-| \geq
\frac{ \alpha n}{1+\delta \rho}+1$ and $|S_+| \geq \frac{ \alpha
n}{1+\delta \rho}+1$, where $S_-$ and $S_+$ are negative support and
positive support of $w$ respectively.

We will prove by contradiction. Suppose without loss of generality
that there exists a nonzero $w$ in $\mathbf{Null}(A)$ such that $|S_-|
=s \leq \frac{
 \alpha n}{1+\delta \rho}$, then the set $E(S_-)$ of
edges connected to nodes in $S_-$ satisfies
\begin{equation}\nonumber
d_l s \leq |E(S_-)| \leq d_u s.
\end{equation}
Then the set $\Gamma(S_-)$ of neighbors of $S_-$ satisfies
\begin{equation}\label{eqn:gamma_wn}\nonumber
d_u s \geq |E(S_-)| \geq |\Gamma(S_-)| \geq \delta |E(S_-)| \geq
\delta d_l s,
\end{equation}
where the second to last equality comes from the expander property.

Notice that $\Gamma(S_-)=\Gamma(S_+)=\Gamma(S_- \cup S_+)$,
otherwise $Aw=0$ does not hold, then
\begin{equation}\nonumber
|S_+| \geq  \frac{|\Gamma(S_+)|}{d_u}= \frac{|\Gamma(S_-)|}{d_u}\geq
\frac{ \delta d_l s}{d_u}= \delta \rho s.
\end{equation}

%

Now consider the set $S_- \cup S_+$, we have $|S_- \cup S_+| \geq
(1+\delta \rho)s$. Pick an arbitrary subset $\tilde {S} \in S_- \cup
S_+$ such that $|\tilde {S}|=(1+\delta \rho)s \leq \alpha n$.
From expander property, we have
\begin{equation}\nonumber
 |\Gamma(\tilde {S})| \geq  \delta |E(\tilde {S})| \geq \delta d_l |\tilde {S}|
 =\delta \rho (1+\delta \rho)d_u s > d_u s.
\end{equation}
The last inequality holds since $\delta \rho (1+\delta \rho)>1$
provided $\delta \rho > \frac{\sqrt{5}-1}{2}$.  But $|\Gamma(\tilde
{S})| \leq |\Gamma (S_- \cup S_+)|=|\Gamma(S_-)| \leq d_u s$. A
contradiction arises, which completes the proof.
\end{proof}

\begin{cor}\label{cor:regular}  For an adjacency matrix $A$ of an $(\alpha, \delta)$ expander with constant left degree $d$, if $\delta > \frac{\sqrt{5}-1}{2}$, then for any nonnegative $k$-sparse vector $x_0$
with $k \leq \frac{\alpha}{1+\delta}n$, $\{x ~|~ Ax=Ax_0, x\geq 0\}$
is a singleton.
\end{cor}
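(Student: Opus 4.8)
The plan is to obtain this statement as the degenerate case $\rho = 1$ of Theorem \ref{thm:expander}, rather than re-running the expander argument from scratch. First I would record the only structural input that distinguishes a constant-degree expander: if every left node has exactly degree $d$, then the minimum and maximum left degrees coincide, $d_l = d_u = d$. By the definition $\rho = d_l/d_u$ used in Theorem \ref{thm:expander}, this forces $\rho = 1$. No further combinatorial work about neighbor sets or edge counts is needed, since all of that is already encapsulated in the more general theorem.

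With $\rho = 1$ substituted into Theorem \ref{thm:expander}, the hypothesis $\delta\rho > \frac{\sqrt{5}-1}{2}$ collapses to exactly $\delta > \frac{\sqrt{5}-1}{2}$, and the sparsity threshold $\frac{\alpha}{1+\delta\rho}n$ becomes $\frac{\alpha}{1+\delta}n$. These match verbatim the hypothesis and the sparsity bound asserted in the corollary, and the conclusion (the set $\{x \mid Ax=Ax_0,\, x\geq 0\}$ being a singleton) is transported unchanged. I expect essentially no obstacle here: the corollary is a direct specialization, and the single fact requiring verification is that $\rho = 1$ for constant left degree, which is immediate from the definitions. The only point worth stating explicitly for the reader is that a constant-degree graph is indeed an admissible instance of the $[d_l,d_u]$ setting with the endpoints equal, so that Theorem \ref{thm:expander} applies without modification.
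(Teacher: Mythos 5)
Your proposal is correct and matches the paper's intent exactly: the corollary appears in the paper without a separate proof, precisely because it is the specialization of Theorem \ref{thm:expander} to $d_l = d_u = d$, which forces $\rho = 1$ and turns the hypothesis $\delta\rho > \frac{\sqrt{5}-1}{2}$ and the bound $\frac{\alpha}{1+\delta\rho}n$ into the stated ones. Nothing further is needed.
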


Theorem \ref{thm:expander} together with Corollary \ref{cor:regular}
is an extension to existing results. Theorem 3.5 of \cite{KDXH09}
shows that for an $(\alpha, \delta)$ expander with constant left
degree $d$, if $d \delta >1$, then there exists a matrix $\tilde{A}$
(a perturbation of $A$) such that $\{x ~|~\tilde{A}x=\tilde{A}x_0,
x\geq
0\}$ is a singleton for every nonnegative 
$\delta \alpha n$-sparse $x_0$. Our result instead can directly
quantify the sparsity threshold needed for a vector to be a unique
solution to compressed measurements induced by $A$, not its
perturbation.
\cite{BeI08} discussed the success of $L_1$ recovery of a general
vector $x$ for expanders with constant left degree. If we apply
Theorem 1 of \cite{BeI08} to cases where $x$ is known to be
nonnegative, the result can be interpreted as that $\{x ~|~Ax=Ax_0,
x\geq 0\}$ is a singleton for any nonnegative $\frac{\alpha}{2}
n$-sparse vector $x_0$ 
if $\delta
> \frac{5}{6} \approx 0.833$. Our result in Corollary \ref{cor:regular} implies that if $\delta > \frac{\sqrt{5}-1}{2}\approx
0.618$, $x_0$ can be $\frac{\alpha}{1+\delta}  n$-sparse and still
be the unique nonnegative solution. 

\cite{SiS96,FMSSW07} proved that for any $m$, $n$ and $\delta>0$,
there exists an $(\alpha, \delta)$ expander with constant left
degree $d$ for some $d$ and $\alpha>0$, and such an expander can be
generated through random graphs. There also exist explicit
constructions of expander graphs \cite{CRVW02}. Combining the
results with Corollary \ref{cor:regular}, for any $m$ and $n$, we
can generate an $(\alpha, \delta)$ expander with adjacency matrix
$A$ such that $\{x ~|~ Ax= Ax_0, x \geq 0\}$ is a singleton for any
nonnegative $kn$-sparse $x_0$, where $k=\frac{\alpha}{1+\delta}>0$.
Thus, same as Bernoulli 0-1 matrices, the adjacency matrix $A$ of an
$(\alpha, \delta)$ expander has the property that $\{x ~|~ Ax= Ax_0,
x \geq 0\}$ is a singleton as long as the support size of $x_0$ is
$O(n)$. We further provide an explicit constant
$\frac{\alpha}{1+\delta}$ of the ratio of the support size to the
dimension. Note that this result is independent of the ratio
$\frac{m}{n}$, while as discussed earlier, if the matrix has i.i.d.
Gaussian entries and $\lim \limits_{n \rightarrow +\infty}
\frac{m}{n}<\frac{1}{2}$, $\{x ~|~ Ax= Ax_0, x \geq 0\}$ is not a
singleton despite the sparsity of $x_0$.

\section{Unique Positive Semidefinite Solution to an Underdetermined
System} \label{sec:matrix}
\subsection{When is Low-rank Positive Semidefinite Solution the
Unique Solution?} \label{sec:iff}
Section \ref{sec:vector} studies the case when a sparse nonnegative
vector is the only nonnegative solution to the system of compressed
linear measurements. Here we extend the problem into the matrix
space. Let $X$ be an $n \times n$ matrix decision variable. Let
$\mathcal{A}:\mathbb{R}^{n\times n}\rightarrow \R^m$ be a linear
map, and let $b \in \R^m$. The main optimization problem under study
for low-rank matrix recovery is
\begin{equation}
\begin{array}{ll}
\minimize & \rank(X)\\
\st & \cA(X)=b\,.
\end{array}
\label{eq:min-rank-prob}
\end{equation}

In this paper, we are interested in looking at the property of the
feasible set $\{X'~|~ \cA(X')=b\}$. 
Indeed, if there exists a $X'$ such that $\cA(X')=b$, then $X'$ plus
any matrix in the null space of $\cA$ also satisfies $\cA(X')=b$.
However, in applications, one is often interested in recovering a
positive semidefinite symmetric matrix $X$, ($X\succeq 0$ and $X\in
S^n$, where $S^n$ is the set of $n \times n$ real symmetric
matrices) from compressed observations. To determine a positive
semidefinite symmetric matrix $X$, we only need to determine
$\frac{n(n+1)}{2}$ unknowns in the upper triangular part of $X$.
Thus the linear operator $\cA$ in (\ref{eq:min-rank-prob}) can be
reduced to an operator
$\mathcal{A}(X^{\bot}):\mathbb{R}^{\frac{n(n+1)}{2}}\rightarrow
\R^m$, where $m \leq \frac{n(n+1)}{2}$ and $X^{\bot}$ denotes the
upper triangular part of the $n \times n$ symmetric matrix $X$. The
null space of $\mathcal{A}$ is a subset of
$\mathbb{R}^{\frac{n(n+1)}{2}}$ such that each point from this set,
arranged accordingly as the upper triangular part of $Y$ of a $n
\times n$ matrix $Y$, satisfies $\mathcal{A}(Y)=0 \in \mathbb{R}^m$.

Now we ask this question, can we uniquely determine the positive
semidefinite symmetric matrix $X$ from $\cA(X)=b$, namely can the
feasible set $\{X'~|~ \cA(X')=b, X'\succeq 0, X'\in S^n\}$ be a
singleton? The next theorem gives an affirmative answer to this
question, and shows that if the linear measurement operator
satisfies certain conditions and the positive semidefinite symmetric
matrix $X$ is of low rank, then the feasible set $\{X'~|~ \cA(X')=b,
X'\succeq 0, X'\in S^n\}$ is a singleton, namely $X$ is not only the
only low-rank solution, but also the only \emph{possible} solution.

\begin{theorem}\label{thm:iff} Let $X$ be a positive semidefinite symmetric matrix of rank $r$ and $\mathcal{A}:\mathbb{R}^{\frac{n(n+1)}{2}}\rightarrow \R^m$ be a linear operator which operates on the upper triangular part of $X$, where $m < \frac{n(n+1)}{2}$. Then $\{X'~|~ \cA(X')=\cA(X), X'\succeq 0, X'\in S^n\}$ is a singleton for all $X$ with rank no greater than $r$, if and only if
for every non-all-zero matrix generated from the null space of $\cA$
has at least $r+1$ negative eigenvalues.
\end{theorem}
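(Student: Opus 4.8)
The plan is to prove both implications by contraposition, exploiting the fact that the null space of $\cA$ is a \emph{linear} subspace of symmetric matrices, so it is closed under negation and scaling. The two tools I would rely on are the spectral decomposition, which splits any symmetric matrix into positive and negative semidefinite parts, and a short inertia argument controlling the number of negative eigenvalues of a difference of two positive semidefinite matrices.

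For the ``only if'' direction I would assume the eigenvalue condition fails: there is a nonzero symmetric $Y$ in the null space of $\cA$ with at most $r$ negative eigenvalues. Writing the spectral decomposition $Y = Y_+ - Y_-$, where $Y_+ \succeq 0$ collects the positive eigenvalues and $Y_- \succeq 0$ collects the negative ones, I note that $\rank(Y_-)$ equals the number of negative eigenvalues of $Y$, hence $\rank(Y_-) \leq r$. Taking $X := Y_-$ and $X' := Y_+$, both are positive semidefinite, $X$ has rank at most $r$, and $X' - X = Y$ lies in the null space of $\cA$, so $\cA(X') = \cA(X)$. Since $Y \neq 0$ we have $X' \neq X$, and therefore the feasible set $\{X' \mid \cA(X')=\cA(X),\, X'\succeq 0,\, X'\in S^n\}$ for this particular $X$ is not a singleton, contradicting the hypothesis.

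For the ``if'' direction I would again argue by contradiction: suppose for some positive semidefinite $X$ with $\rank(X) \leq r$ the feasible set contains a second positive semidefinite matrix $X' \neq X$. Then $Y := X' - X$ is a nonzero element of the null space of $\cA$, and the crux is to bound its number of negative eigenvalues. Let $V_-$ be the span of the eigenvectors of $Y$ associated with its negative eigenvalues, so that $v^T Y v < 0$ for every nonzero $v \in V_-$. Since $X' \succeq 0$, this gives $v^T X v > v^T X' v \geq 0$, i.e.\ $v^T X v > 0$ for all nonzero $v \in V_-$. Because vectors in $\ker(X)$ satisfy $v^T X v = 0$, we get $V_- \cap \ker(X) = \{0\}$, and hence $\dim V_- \leq n - \dim\ker(X) = \rank(X) \leq r$. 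Thus $Y$ has at most $r$ negative eigenvalues, contradicting the eigenvalue condition, which forces at least $r+1$.

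I expect the inertia argument in the ``if'' direction to be the main obstacle: the key insight is that subtracting a positive semidefinite matrix of rank at most $r$ from another positive semidefinite matrix can create at most $r$ negative eigenvalues, which I would establish via the positive-definiteness-on-$V_-$ observation above. Once that fact is in place both directions close quickly. I would also remark that, since the null space is a subspace, $-Y$ belongs to it whenever $Y$ does, so the stated condition on negative eigenvalues is equivalent to the analogous condition on positive eigenvalues; this symmetry is what lets a single condition on the negative eigenvalues suffice.
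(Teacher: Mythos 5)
Your proof is correct, and it differs from the paper's in both directions in ways worth noting. For the necessity direction ("only if"), the paper diagonalizes the bad null-space matrix $Y=U\Lambda U^{-1}$ and hand-builds $X=U\Lambda'U^{-1}$ with $\Lambda'_{i,i}>\max\{-\lambda_i,0\}$ on the first $r$ indices; your choice $X=Y_-$, $X'=Y_+$ is the same idea in its cleanest form (the paper's construction with the inequalities replaced by equalities), and it makes the rank bound $\rank(Y_-)\leq r$ immediate. The real divergence is in the sufficiency direction ("if"): the paper cites a variational eigenvalue inequality from Horn--Johnson, namely that $\lambda_k(A+B)\leq\lambda_{k+r}(A)$ whenever $\rank(B)\leq r$, and applies it with $k=1$, $A=X''-X$, $B=X$ to conclude $\lambda_1(X'')\leq\lambda_{r+1}(X''-X)<0$. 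You instead prove the needed inertia fact from scratch: on the negative eigenspace $V_-$ of $Y=X'-X$ you show $v^TXv>0$, hence $V_-\cap\ker(X)=\{0\}$ and $\dim V_-\leq\rank(X)\leq r$. Your argument is self-contained and elementary (it only uses that $v^TXv=0$ implies $Xv=0$ for PSD $X$, plus a dimension count), whereas the paper's is shorter but leans on a cited interlacing theorem; both hinge on the same underlying principle that a rank-$r$ PSD perturbation can account for at most $r$ negative eigenvalues of the difference. Your closing remark about the $\pm Y$ symmetry of the null space is also correct and explains why a one-sided condition on negative eigenvalues suffices.
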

\begin{IEEEproof}
Sufficiency: we first show that if every non-all-zero symmetric
matrix generated from the null space of $\cA$  has at least $r+1$
negative eigenvalues, then $\{X'~|~ \cA(X')=\cA(X), X'\succeq 0,
X'\in S^n\}$  is a singleton. Suppose instead there exist a $X''\in
S^n$ such that $\cA(X'')=b$, then the upper triangular part of
$X''-X$ is in the null space of the linear operator $\cA$. By the
assumption, we know that $X''-X$ has at least $r+1$ negative
eigenvalues. Since $X''-X$ is a symmetric matrix, its eigenvalues
are real. For a matrix, we denote these eigenvalues in an
nondecreasing order, namely,
\begin{equation*}
\lambda_1\leq\lambda_2\leq\cdots\lambda_{n-1}\leq \lambda_{n}.
\end{equation*}

 By a classical variational characterization of eigenvalues \cite{HornJohnsonBook1}, if $A$ and $B$ are both $n \times n$ Hermitian matrices and $B$ has rank at most $r$, then $\lambda_{k}(A+B) \leq \lambda_{k+r}(A)$, for $k=1,2,..., n-r$. By taking $k=1$, $B=X$ and $A=X''-X$, we have
\begin{equation*}
\lambda_{1}(X'')=\lambda_{1}((X''-X)+X)\leq \lambda_{r+1}(X''-X)<0,
\end{equation*}
by the eigenvalue assumption for $X''-X$. But then $X''$ is not a
positive semidefinite matrix. This contradiction shows that $X$ is
the only element in the the set $\{X'~|~ \cA(X')=\cA(X), X'\succeq
0, X'\in S^n\}$.

Necessity: we need to show that if there exists a nontrivial
symmetric matrix (say $Y$), with its upper triangular part from the
null space of the linear operator $\cA$, has at most $r$ negative
eigenvalues, then we can find an $X$ such that $\{X'~|~
\cA(X')=\cA(X), X'\succeq 0, X'\in S^n\}$ is not a singleton.
Indeed, since $Y$ is a symmetric matrix, it can be diagonalized by
some unitary matrix $U$, namely $Y=U \Lambda U^{-1}$, where
$\Lambda$ is a diagonal matrix with $\Lambda_{i,i}=\lambda_i(Y)$. We
then pick $X=U\Lambda'U^{-1}$, where $\Lambda'$ is a diagonal
matrix, and $\Lambda'_{i,i}>\max\{-\lambda_{i},0\}$ for $1\leq i\leq
r$ and $\Lambda'_{i,i}=0$ for $i>r$. Thus $X$ is a positive
semidefinite matrix with rank no larger than $r$ (note that the
eigenvalues of $\Lambda'$ are not necessarily arranged in
nondecreasing order with respect to $i$ ). Then obviously
$X+Y=U\Lambda''U^{-1}$, where the diagonal entries in the diagonal
matrix $\Lambda''=\Lambda'+\Lambda$ are all nonnegative. Since $Y$
is not a all-zero matrix, $X+Y$ is an element in the set $\{X'~|~
\cA(X')=\cA(X), X'\succeq 0, X'\in S^n\}$ besides $X$.

\end{IEEEproof}

Theorem \ref{thm:iff} establishes the necessary and sufficient
condition for the uniqueness of low-rank positive semidefinite
solution under compressed linear measurements. However, checking
this condition for a specific set of linear measurements seems to be
a hard problem and, in addition, it is not clear whether
asymptotically there exist such linear compressed measurements
satisfying the given condition. So in Section \ref{sec:null}, we
will investigate whether a set of linear measurements (namely the
linear measurement $\cA(\cdot)$) sampled from a certain distribution
will satisfy this condition.

\subsection{The Null Space Analysis of the Gaussian Ensemble}
\label{sec:null}

 We say that the linear operator $\mathcal{A}:\mathbb{R}^{\frac{n(n+1)}{2}}\rightarrow \R^m$ is sampled from an independent Gaussian ensemble if its $i$-th ($1\leq i\leq m$) operation, denoted by $\mathcal{A}_{i}:\mathbb{R}^{\frac{n(n+1)}{2}}\rightarrow \R$, is the inner product
 \begin{equation*}
 \langle X, A_i\rangle= trace(X^TA_i),
 \end{equation*}
 where $A_i$ is an $n \times n$ symmetric matrix with independent random elements in its upper triangular part. On the diagonal of $A_i$, its elements are distributed as real Gaussian random variables $N(0,1)$ and, in the off-diagonal part, its elements are distributed as $N(0,\frac{1}{2})$. Across the index $i$, the $A_i$'s are also sampled independently. One main result of this paper can be stated in the following theorem.

\begin{theorem}\label{thm:existence2} Consider a linear operator $\mathcal{A}:\mathbb{R}^{\frac{n(n+1)}{2}}\rightarrow \R^m$ sampled from an independent Gaussian ensemble. Let $m=\alpha \times \frac{n(n+1)}{2}$. Then there exists a constant $\alpha<1$, independent of $n$, such that with overwhelming probability as $n$ goes to $\infty$, any nonzero symmetric $n \times n$ square matrix with its upper triangular part from the null space of the linear operator $\cA$ has at least $\xi n$ negative eigenvalues, where $\xi>0$ is a constant that is independent of $n$. Thus with overwhelmingly high probability,  any positive semidefinite matrix of rank no larger than $\xi n-1$ will be the singleton in the set $\{X'~|~ \cA(X')=\cA(X), X'\succeq 0, X'\in S^n\}$.
\end{theorem}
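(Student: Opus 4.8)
The plan is to deduce the theorem from Theorem~\ref{thm:iff}: it suffices to show that, with overwhelming probability, \emph{every} nonzero symmetric matrix whose upper-triangular part lies in the null space of $\cA$ has at least $\xi n$ negative eigenvalues. Taking $r=\xi n-1$ then makes the condition of Theorem~\ref{thm:iff} hold, so every positive semidefinite matrix of rank at most $\xi n-1$ is the unique feasible point. Thus the whole problem reduces to showing that the null space of $\cA$ avoids the set of ``almost positive semidefinite'' matrices.

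First I would pass to coordinates. Identify $S^n$ with $\R^N$, $N=\frac{n(n+1)}{2}$, using the orthonormal (Frobenius) basis $\{E_{ii}\}\cup\{\tfrac{1}{\sqrt2}(E_{ij}+E_{ji})\}_{i<j}$. The variance pattern defining the Gaussian ensemble ($N(0,1)$ on the diagonal, $N(0,\tfrac12)$ off-diagonal) is chosen exactly so that, in this basis, each $A_i$ becomes a vector with i.i.d.\ standard Gaussian entries; hence $\cA$ is represented by an $m\times N$ matrix with i.i.d.\ $N(0,1)$ entries, and its null space is a uniformly (Haar) distributed subspace of dimension $N-m=(1-\alpha)N$. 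This is precisely the setting in which Gordon's ``escape through the mesh'' applies.

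The core step is a Gaussian-width estimate. Let $T=\{Y\in S^n:\|Y\|_F=1,\ Y\text{ has fewer than }\xi n\text{ negative eigenvalues}\}$, a subset of the Frobenius unit sphere. By Gordon's theorem the random null space (of codimension $m=\alpha N$) misses $T$ with probability at least $1-e^{-\Omega((\sqrt m-w(T))^2)}$ provided $w(T)<\sqrt m$, where $w(T)=\E_{G}\sup_{Y\in T}\trace(GY)$ and $G$ is drawn from the same ensemble. So I must bound $w(T)$. For fixed $G$ with eigenvalues $\lambda_1\le\cdots\le\lambda_n$, the classical fact that $\max_{Y}\trace(GY)$ over $Y$ with a prescribed spectrum is attained when the eigenvectors of $Y$ and $G$ are aligned (the von~Neumann trace inequality) shows that the optimal $Y\in T$ puts positive weight on the positive eigenvalues of $G$ and its at most $\xi n$ negative weights on the most negative eigenvalues of $G$, giving
\[
\sup_{Y\in T}\trace(GY)=\Big(\sum_{\lambda_i>0}\lambda_i^2+\sum_{i=1}^{\xi n}\lambda_i^2\Big)^{1/2}.
\]
Using the symmetry of the spectrum, $\E\sum_{\lambda_i>0}\lambda_i^2=\tfrac12\E\|G\|_F^2=N/2$, while the $\xi n$ most negative eigenvalues lie near the left spectral edge (magnitude $O(\sqrt n)$ under this normalization), so $\E\sum_{i=1}^{\xi n}\lambda_i^2=O(\xi N)$. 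By Jensen, $w(T)\le\big(\tfrac12+c\xi\big)^{1/2}\sqrt N$ for an absolute constant $c$.

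Finally, since $\tfrac12+c\xi<1$ for $\xi$ small, I would fix such a $\xi>0$ and any $\alpha\in(\tfrac12+c\xi,\,1)$; then $\sqrt m=\sqrt{\alpha N}>w(T)$ with a gap that is a constant multiple of $\sqrt N=\Theta(n)$, so the escape probability is $1-e^{-\Omega(n^2)}$, overwhelming. I expect the main obstacle to be making the width bound rigorous: controlling the extreme (left-edge) eigenvalues of the ensemble and the concentration of $\sum_{\lambda_i>0}\lambda_i^2$ sharply enough to certify that the deficiency of $w(T)^2$ below $N$ is a genuine positive fraction of $N$, uniformly in $n$. Everything else---the reduction through Theorem~\ref{thm:iff}, the coordinate change, and the invocation of Gordon's theorem---is routine once this spectral estimate is in hand.
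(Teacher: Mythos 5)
Your proposal is correct in outline, but it takes a genuinely different route from the paper. The paper's proof keeps your first two ingredients (the reduction through Theorem~\ref{thm:iff} and the identification of the null space as a rotationally invariant random subspace, its Lemma~\ref{lem:nulldist}), but then runs a three-part program: discretize the unit sphere of null-space coefficients by an $\epsilon$-net, prove for each \emph{single} net point that the generated matrix has $\approx\alpha_1 n$ eigenvalues below a negative constant $c+\delta$ using the Ben Arous--Guionnet large deviation principle for the empirical spectral measure at scale $n^2$ (Theorems~\ref{thm:eigenconcen} and \ref{them:AliceLDP}), and finally transfer from the net to the whole sphere via the Hoffmann--Wielandt perturbation bound (Theorem~\ref{thm:HoffmanWielandt}) together with a union bound, which forces $\alpha$ close enough to $1$ that the net-cardinality exponent $(1-\alpha)\frac{n(n+1)}{2}\log(1+2/\epsilon)$ is dominated by the $-c_1n^2$ deviation exponents. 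You replace all of that machinery with a single application of Gordon's escape-through-the-mesh inequality, reducing everything to one Gaussian-width estimate for the closed set $T$ of unit-Frobenius-norm matrices with fewer than $\xi n$ negative eigenvalues; your width formula via von Neumann's trace inequality is exact, and your bound $w(T)^2\leq(\tfrac12+c\xi)\frac{n(n+1)}{2}$ is easier to certify than you fear: since the width is an \emph{expectation}, you never need concentration of $\sum_{\lambda_i>0}\lambda_i^2$ (spectral symmetry gives its mean exactly as half of $\E\|G\|_F^2$), and the edge term only needs $\E\|G\|_{op}^2=O(n)$, which is standard for the GOE-type ensemble. What each approach buys: yours avoids the continuum-to-net transfer and the union bound entirely, gives failure probability $e^{-\Omega(n^2)}$ just as the paper does, and --- notably --- yields an explicit and essentially sharp trade-off, namely any $\alpha>\tfrac12+c\xi$ suffices, which answers quantitatively the $\alpha$-versus-$\xi$ question the paper explicitly defers to future work; the paper's argument, in exchange, produces the stronger per-point statement that a constant fraction of eigenvalues lies below a fixed negative constant (not merely below zero), which is what its perturbation step requires, and it illustrates a template that extends to ensembles where a Gordon-type comparison is unavailable. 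Two small points to tighten in your write-up: Gordon's theorem needs $T$ closed (it is: having at most $k$ negative eigenvalues is equivalent to $\lambda_{k+1}\geq 0$, a closed condition) and needs $w(T)<a_m=\E\|g\|_2$ rather than $\sqrt m$, but $a_m\geq\sqrt{m}-1$ makes this immaterial asymptotically.
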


Note that in Theorem \ref{thm:existence2}, the constant $\xi$ may
depend on $\alpha$. Theorem \ref{thm:existence2} confirms that there
indeed exists a sequence of linear operators such that \emph{every}
nonzero element in their null spaces necessarily generates a
symmetric matrix having a sufficiently large number ($\xi n$) of
negative eigenvalues. The ``guaranteed'' number of negative
eigenvalues is highly nontrivial in the sense that $\xi n$ grows
proportionally with $n$ while the null space for the linear operator
$\cA$ has dimension at least $(1-\alpha)\frac{n(n+1)}{2}$, which
grows proportionally with $n^2$. This seems counterintuitive at
first sight: a null space of such a large dimension should have been
able to accommodate at least one point which generates a symmetric
matrix with very few or even none negative eigenvalues.

The main difficulty in proving Theorem \ref{thm:existence2} is to
show that for \emph{all} the nonzero symmetric matrices generated
from the points in the null space of the random linear operator
$\cA$, the claimed fact holds \emph{universally} with overwhelming
probability. This seems to be a daunting job since the null space of
every linear operator is a continuous object and there are
uncountably many symmetric matrices that can be generated from it.
In fact, we have the following probabilistic characterization with a
shortened proof for the null space of the linear operator sampled
from the independent Gaussian Ensemble.
\begin{lemma}
If the linear operator
$\mathcal{A}(X):\mathbb{R}^{\frac{n(n+1)}{2}}\rightarrow \R^m$ is
sampled from independent Gaussian Ensemble, by representing the vectors from the null
space of $\mathcal{A}$ by $\frac{n(n+1)}{2} \times 1$ column
vectors, the distribution of its null space is (almost everywhere)
equivalent to the distribution of a
$(\frac{n(n+1)}{2}-m)$-dimensional subspace in
$\mathbb{R}^{\frac{n(n+1)}{2}}$ whose basis can be represented by a
$\frac{n(n+1)}{2} \times (\frac{n(n+1)}{2}-m)$ matrix $Z$ whose
elements are independent Gaussian random variables, $N(0,1)$ for
elements in the rows corresponding to the $n$ diagonal elements of
$X$ and $N(0,\frac{1}{2})$ for elements in the rows corresponding to
the $\frac{n(n-1)}{2}$ off-diagonal elements. \label{lem:nulldist}
\end{lemma}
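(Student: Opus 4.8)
The plan is to reduce the weighted, symmetric setting to the classical fact that the null space of a matrix with independent standard Gaussian entries is orthogonally invariant in distribution, and then to undo a diagonal change of variables so as to read off the claimed inhomogeneous variances. Throughout, write $N=\frac{n(n+1)}{2}$ and let $x\in\R^{N}$ denote the vectorization of the upper triangle of the symmetric matrix $X$.

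First I would make the inner products explicit after vectorization. Using symmetry of both $X$ and $A_i$, one has $\langle X, A_i\rangle=\trace(XA_i)=\sum_{j}X_{jj}(A_i)_{jj}+2\sum_{j<k}X_{jk}(A_i)_{jk}$, so the $i$-th measurement equals $\tilde a_i^{T}x$, where $\tilde a_i\in\R^{N}$ is the measurement vector whose coordinates indexed by the $n$ diagonal positions equal $(A_i)_{jj}\sim N(0,1)$ and whose coordinates indexed by the $\frac{n(n-1)}{2}$ off-diagonal positions equal $2(A_i)_{jk}\sim N(0,2)$, the variance $2$ coming from the factor $2$ together with the off-diagonal variance $\tfrac12$ of $A_i$. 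Consequently the null space of $\mathcal A$, as a subset of $\R^N$, is exactly the Euclidean orthogonal complement of $\mathrm{span}\{\tilde a_1,\dots,\tilde a_m\}$, and the $\tilde a_i$ are independent with diagonal coordinates of variance $1$ and off-diagonal coordinates of variance $2$.

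Next I would introduce the diagonal matrix $\Sigma^{1/2}$ on $\R^{N}$ equal to $1$ on the rows indexed by diagonal coordinates and $\sqrt 2$ on the rows indexed by off-diagonal coordinates, so that $\tilde a_i=\Sigma^{1/2}g_i$ with $g_i\sim N(0,I)$ independent. Setting $y=\Sigma^{1/2}x$, the condition $\tilde a_i^{T}x=0$ becomes $g_i^{T}y=0$; since $\Sigma^{1/2}$ is invertible, this shows that, as sets, $\mathbf{Null}(\mathcal A)=\Sigma^{-1/2}\,\mathbf{Null}(G)$, where $G$ is the $m\times N$ matrix with rows $g_i^{T}$ of independent standard Gaussian entries. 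I would then invoke the classical rotation-invariance fact: with probability one $G$ has rank $m$, and the null space of such an i.i.d.\ standard Gaussian matrix is distributed as the column space of an $N\times(N-m)$ matrix $W$ with independent $N(0,1)$ entries, both being the orthogonally invariant (Haar) distribution on the Grassmannian of $(N-m)$-planes. Hence $\mathbf{Null}(\mathcal A)=\mathrm{col}(\Sigma^{-1/2}W)$ in distribution, and defining $Z=\Sigma^{-1/2}W$ and tracking the row-wise scaling, the diagonal rows are left unscaled (still $N(0,1)$) while the off-diagonal rows are multiplied by $1/\sqrt2$ (hence $N(0,\tfrac12)$), which is precisely the claimed basis matrix.

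The only delicate points are the variance bookkeeping in the first step (the factor-of-two from symmetry that turns the off-diagonal variance $\tfrac12$ of $A_i$ into measurement variance $2$, and after the inverse weighting back into $\tfrac12$ for $Z$), and justifying the ``almost everywhere'' qualifier, which amounts to the probability-one events that $G$ has full row rank and $W$ has full column rank, guaranteeing that both genuinely describe $(N-m)$-dimensional subspaces. I expect the rotation-invariance step to be the conceptual crux, but it is a standard fact, and the main labor is simply carrying the diagonal weights through the change of variables correctly.
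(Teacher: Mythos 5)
Your proof is correct and takes essentially the same approach as the paper's: both rest on the fact that the null space of an i.i.d.\ Gaussian matrix is rotationally invariant, hence Haar-distributed on the Grassmannian and representable by an i.i.d.\ Gaussian basis matrix of the appropriate dimensions. The paper compresses all of your $\Sigma^{\pm 1/2}$ bookkeeping (the factor-of-two from the trace inner product and its inversion in the null-space basis) into the single phrase ``with a normalization for the variance,'' so your explicit vectorization and diagonal change of variables is simply a more careful rendering of the same argument.
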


\begin{proof}
This lemma follows from the fact that a random matrix with zero mean
i.i.d. Gaussian distributed entries generates a random subspace
whose distribution is rotationally invariant (namely the
distribution of that random subspace does not change when it is
rotated by a unitary rotation). We also note that if a random
subspace has a rotationally invariant distribution, its null space
also has a rotationally invariant distribution, which again can be
generated by a matrix with zero mean i.i.d. Gaussian distributed
entries of appropriate dimensions (with probability 1, the dimension
of this null space is $(\frac{n(n+1)}{2}-m)$). With a normalization
for the variance of the Gaussian distributed entries, we have this
lemma.
\end{proof}

By Lemma \ref{lem:nulldist}, the null space of the linear operator
$\cA$ sampled from independent Gaussian Ensemble can be represented
by

\begin{equation}\nonumber
\{z~|~z=Zw, w\in \mathbb{R}^{\frac{n(n+1)}{2}-m}\},
\end{equation}
where $Z$ is a $\frac{n(n+1)}{2} \times (\frac{n(n+1)}{2}-m)$ matrix
as mentioned in Lemma \ref{lem:nulldist}.

We should first notice that in order to prove the property that
``any nonzero symmetric $n \times n$ square matrix with its upper
triangular part from the null space of the linear operator $\cA$ has
at least $\xi n$ negative eigenvalue'' , we only need to restrict
our attention to prove that property for the set of symmetric
matrices generated by the set of points
\begin{equation}\nonumber
\{z~|~z=\frac{1}{\sqrt{n}}Zw, w\in \mathbb{R}^{\frac{n(n+1)}{2}-m}, \|w\|_2=1\},
\end{equation}
in the null space of the linear operator $\cA$.

Building on this observation, we can proceed to divide the formal
proof of Theorem \ref{thm:existence2} into three steps. Firstly,
since we can not show directly our theorem for every point in the
null space, instead we first try to discretize the sphere
\begin{equation}\nonumber
\{w~|~\|w\|_2=1, w\in \mathbb{R}^{\frac{n(n+1)}{2}-m}\}
\end{equation}
into a finite $\epsilon$-net consisting of a finite number of points
on the sphere  such that every point in the set $\{w~|~\|w\|_2=1,
w\in \mathbb{R}^{\frac{n(n+1)}{2}-m}\}$ is in the $\epsilon$ (in
terms of Euclidean distance) neighborhood of at least one point from
the $\epsilon$-net. Formally, an $\epsilon$-net is a subset
$\mathcal{S} \subset \{w~|~\|w\|_2=1, w\in
\mathbb{R}^{\frac{n(n+1)}{2}-m}\}$ such that for every point $t$ in
the set $\{w~|~\|w\|_2=1, w\in \mathbb{R}^{\frac{n(n+1)}{2}-m}\}$,
one can find $s$ in $\mathcal{S}$ such that $\|t-s\|_2\leq
\epsilon$. The following lemma is well known in high dimensional
geometry about the size estimate of such a $\epsilon$-net, for
example, see \cite{Ledoux00}:

\begin{lemma}
There is an  $\epsilon$-net $\mathcal{S}$ of the unit sphere of
$\mathbb{R}^{\frac{n(n+1)}{2}-m}$ of cardinality less than
$(1+\frac{2}{\epsilon})^{\frac{n(n+1)}{2}-m}$, which is no larger
than $e^{\frac{n(n+1)-2m}{\epsilon}}$.

\end{lemma}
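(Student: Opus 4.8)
The plan is to prove this by the standard volumetric packing argument, which gives covering-number bounds for spheres in any dimension. Write $d \triangleq \frac{n(n+1)}{2}-m$ for the ambient dimension, so the object of interest is the unit sphere of $\R^{d}$. First I would construct the net by a greedy/maximality argument rather than by an explicit discretization: let $\mathcal{S}$ be a maximal $\epsilon$-separated subset of the unit sphere, i.e. a set of points pairwise at Euclidean distance strictly greater than $\epsilon$ that cannot be enlarged while preserving this separation. Such a maximal set exists (by Zorn's lemma, or simply by greedy selection, since the sphere is compact and any $\epsilon$-separated set on it is finite by the bound we are about to derive). The key observation is that maximality forces $\mathcal{S}$ to be an $\epsilon$-net: if some point $t$ on the sphere were at distance greater than $\epsilon$ from every element of $\mathcal{S}$, then $\mathcal{S}\cup\{t\}$ would still be $\epsilon$-separated, contradicting maximality. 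Hence every $t$ satisfies $\|t-s\|_{2}\leq\epsilon$ for some $s\in\mathcal{S}$.

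Next I would bound $|\mathcal{S}|$ by comparing volumes of full-dimensional balls in $\R^{d}$. Around each $s\in\mathcal{S}$ place the open Euclidean ball of radius $\epsilon/2$. Because the points of $\mathcal{S}$ are $\epsilon$-separated, these balls are pairwise disjoint. Moreover each such ball is centered at a point of norm $1$, so it is contained in the ball of radius $1+\frac{\epsilon}{2}$ about the origin. Writing $V_{d}(\rho)=c_{d}\rho^{d}$ for the volume of a radius-$\rho$ ball in $\R^{d}$ (the constant $c_{d}$ depending only on $d$), disjointness and containment give
\begin{equation}\nonumber
|\mathcal{S}|\cdot c_{d}\Big(\frac{\epsilon}{2}\Big)^{d}\;\leq\;c_{d}\Big(1+\frac{\epsilon}{2}\Big)^{d}.
\end{equation}
Cancelling $c_{d}$ and taking $d$-th roots yields
\begin{equation}\nonumber
|\mathcal{S}|\;\leq\;\Big(\frac{1+\epsilon/2}{\epsilon/2}\Big)^{d}=\Big(1+\frac{2}{\epsilon}\Big)^{d},
\end{equation}
which is the first claimed bound with $d=\frac{n(n+1)}{2}-m$.

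For the second, looser bound I would simply apply the elementary inequality $1+x\leq e^{x}$ with $x=\frac{2}{\epsilon}$, giving $\big(1+\frac{2}{\epsilon}\big)^{d}\leq e^{2d/\epsilon}=e^{(n(n+1)-2m)/\epsilon}$, as stated. The argument is essentially routine, so there is no serious obstacle; the only point demanding mild care is the observation that the packing is done with \emph{full} $d$-dimensional balls (not surface caps), which is exactly what makes the clean power-of-$d$ volume scaling apply. One should also note the two harmless conventions already built into the statement: the radius-$\frac{\epsilon}{2}$ balls are taken open (so that their disjointness is unambiguous) and the maximal separated set is finite precisely because the derived bound is finite, so the existence of $\mathcal{S}$ and the counting estimate are really established together.
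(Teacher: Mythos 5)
Your proposal is correct, and it is worth noting that the paper itself offers no proof of this lemma at all: it is stated as a standard fact of high-dimensional geometry with a pointer to the literature (Ledoux's monograph on concentration of measure). What you have written is precisely the classical volumetric argument that underlies the cited estimate, so your contribution is to make the paper self-contained rather than to diverge from it. All the steps check out: a maximal $\epsilon$-separated subset of the sphere exists and is automatically an $\epsilon$-net; the open balls of radius $\frac{\epsilon}{2}$ about its points are pairwise disjoint (two centers at distance greater than $\epsilon$ cannot share a point within $\frac{\epsilon}{2}$ of both) and all lie inside the ball of radius $1+\frac{\epsilon}{2}$; the volume ratio then gives $|\mathcal{S}|\leq\bigl(1+\frac{2}{\epsilon}\bigr)^{d}$ with $d=\frac{n(n+1)}{2}-m$; and $1+x\leq e^{x}$ converts this to $e^{2d/\epsilon}=e^{(n(n+1)-2m)/\epsilon}$, matching the exponent in the statement. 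Two cosmetic remarks only: the lemma says ``less than'' where your packing argument most naturally yields ``at most'' (strictness can be recovered by observing that finitely many disjoint open balls of radius $\frac{\epsilon}{2}$ cannot exhaust the volume of the containing ball, but nothing downstream in the paper's $\epsilon$-net analysis depends on the distinction); and your parenthetical that full-dimensional balls, not spherical caps, are the right objects to pack is exactly the point a careless reader might miss, so it is good that you flagged it.
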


Secondly, using the large deviation technique or concentration of
measure result, we establish the relevant properties for the
symmetric matrices generated from these discrete points on the
$\epsilon$-net. For example, the symmetric matrices have a large
number of negative eigenvalues with overwhelming probability.
Thirdly, we show how property guarantees on the $\epsilon$-net can
be used to establish the null space property for the whole null
space of the linear operator $\cA$. Section \ref{sec:single} and
\ref{sec:netanalysis} are then devoted to completing these steps to
prove Theorem \ref{thm:existence2}.

\subsection{Concentration for a Single Point} \label{sec:single} We
take any point $w$ from the $\epsilon$-net for the set
$\{w~|~\|w\|_2=1, w\in \mathbb{R}^{\frac{n(n+1)}{2}-m}\}$ and its
corresponding point $z=\frac{1}{\sqrt{n}}Zw$ in the null space of
the linear operator $\cA$, where $Z$ is the random basis as
mentioned in Lemma \ref{lem:nulldist}.  Then we argue that the
symmetric matrix $G$ with its upper triangular part generated from
$z$ has many negative eigenvalues with overwhelming probability. It
is obvious that with the i.i.d. Gaussian probabilistic model for
$Z$, the elements of $G$ are independently Gaussian distributed
$N(0,\frac{1}{n})$ random variables on the diagonal and
independently Gaussian distributed $N(0,\frac{1}{2n})$ on the
off-diagonal.

\begin{theorem}
 The smallest $\alpha_1 n$ ($\alpha_1<\frac{1}{2}$) eigenvalues of the symmetric matrix $G$ with its upper triangular part generated from $z$ will be upper bounded by $c+\delta$ with overwhelming probability $1-e^{-c_1n^2}$, where $c$ is a negative number as determined from the semicircular law
 \begin{equation*}
 \alpha_1=\frac{1}{\pi}\int^{c}_{-\infty}\mathbf{1}_{|x|<\sqrt{2}} \sqrt{2-x^2} \,dx,
 \end{equation*}
$\delta$ is an arbitrarily small positive number, $c_1$ is a
positive constant independent of $n$ and $\mathbf{1}$ is the
indicator function. \label{thm:eigenconcen}
\end{theorem}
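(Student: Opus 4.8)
The plan is to recognize that, up to the stated normalization, $G$ is a Gaussian Orthogonal Ensemble matrix: its entries are jointly Gaussian, independent modulo symmetry, with variance $\frac{1}{n}$ on the diagonal and $\frac{1}{2n}$ off the diagonal. By Wigner's semicircular law the empirical spectral distribution of such a matrix converges (weakly, and in expectation) to the semicircular law $\mu_{sc}$ supported on $[-\sqrt{2},\sqrt{2}]$ with density $\frac{1}{\pi}\sqrt{2-x^2}$, which is precisely the measure defining $c$. Since saying the smallest $\alpha_1 n$ eigenvalues lie below $c+\delta$ is the same as the counting statement $N(c+\delta):=\#\{i:\lambda_i(G)\le c+\delta\}\ge \alpha_1 n$, it suffices to produce a lower bound of $\alpha_1 n$ on this eigenvalue count that holds with probability $1-e^{-c_1 n^2}$.

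The crux is obtaining this bound at the \emph{quadratic} speed $e^{-c_1 n^2}$, which is exactly what the later union bound over the $\epsilon$-net (of cardinality $e^{O(n^2)}$) demands. I would not control $\lambda_{\alpha_1 n}$ directly, because individual-eigenvalue concentration for a matrix with $O(1/\sqrt{n})$-scale entries only yields speed $e^{-\Theta(n)}$. Instead I replace the sharp count by a smooth linear eigenvalue statistic. Fix a piecewise-linear, $\frac{2}{\delta}$-Lipschitz function $f$ with $f\equiv 1$ on $(-\infty,c+\delta/2]$ and $f\equiv 0$ on $[c+\delta,\infty)$, so that $\mathbf{1}_{(-\infty,c+\delta/2]}\le f\le \mathbf{1}_{(-\infty,c+\delta]}$ pointwise, and hence $\trace f(G)\le N(c+\delta)$. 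The Hoffman--Wielandt inequality together with Cauchy--Schwarz shows $G\mapsto \trace f(G)$ is $\sqrt{n}\,\|f\|_{\mathrm{Lip}}$-Lipschitz in the Frobenius norm; writing $G=\frac{1}{\sqrt{n}}H$ for the underlying standard Gaussian vector $H$ (so that $\|G(H)-G(H')\|_F=\frac{1}{\sqrt{n}}\|H-H'\|_2$) contributes a compensating factor $\frac{1}{\sqrt{n}}$, leaving $\trace f(G)$ an $\|f\|_{\mathrm{Lip}}$-Lipschitz function of $H$, with constant independent of $n$. The Gaussian concentration-of-measure inequality then gives $\prob(|\trace f(G)-\E\,\trace f(G)|>s)\le 2e^{-s^2/(2\|f\|_{\mathrm{Lip}}^2)}$; because $\trace f(G)$ is itself of order $n$, choosing $s$ of order $n$ produces precisely the $n^2$ exponent.

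It remains to compute the mean and close the gap. By the semicircular law in expectation, $\frac{1}{n}\E\,\trace f(G)=\E\!\int f\,d\hat\mu_n \to \int f\,d\mu_{sc}\ge F_{sc}(c+\delta/2)$, where $F_{sc}$ is the semicircular CDF. Since $\alpha_1<\frac{1}{2}$ forces $c<0$ and $\alpha_1>0$ forces $c>-\sqrt{2}$, the point $c$ lies strictly inside the support where the density is positive, so $\eta:=F_{sc}(c+\delta/2)-\alpha_1>0$. Thus for $n$ large, $\E\,\trace f(G)\ge (\alpha_1+\tfrac{3}{4}\eta)n$, and applying the concentration bound with $s=\tfrac{\eta}{2}n$ yields, with probability at least $1-e^{-c_1 n^2}$ for a suitable constant $c_1>0$, the estimate $N(c+\delta)\ge \trace f(G)\ge (\alpha_1+\tfrac{3}{4}\eta)n-\tfrac{\eta}{2}n\ge \alpha_1 n$, as required.

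The main obstacle is the quadratic speed $e^{-c_1 n^2}$, and the whole strategy is built around it: replacing the discontinuous count $N(c+\delta)$ by a smooth \emph{collective} statistic $\trace f(G)$ is essential, since the $\sqrt{n}$ factor from summing over all $n$ eigenvalues cancels the $1/\sqrt{n}$ entry scaling, producing an $O(1)$ Lipschitz constant and hence an $n^2$ exponent once the target deviation is of the natural order $n$. The secondary technical points I would verify carefully are the $o(n)$ control on the expected semicircular statistic, so that the mean genuinely exceeds $\alpha_1 n$ by a linear margin, and the exact constant in the Frobenius-to-Gaussian Lipschitz reduction arising from the $\frac{1}{n}$ versus $\frac{1}{2n}$ variances on and off the diagonal.
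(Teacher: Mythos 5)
Your proof is correct, and it takes a genuinely different route from the paper's. The paper invokes the full large deviation principle of Ben Arous--Guionnet \cite{Alice97} (restated as Theorem \ref{them:AliceLDP}): it smooths the indicator into a continuous $f$ so that the bad event $\{\hat{\mu}^n(f)\le\alpha_1\}$ sits inside a weakly \emph{closed} set $B$ of probability measures, observes that the semicircle law lies outside $B$, and concludes $\prob(B)\le e^{-n^2\inf_B I_1+o(n^2)}$ with $\inf_B I_1>0$ because $I_1$ is a good rate function with the semicircle law as its unique minimizer. You smooth the indicator for a different purpose: to turn the eigenvalue count into the Lipschitz linear statistic $\trace f(G)$, which by Hoffman--Wielandt plus Cauchy--Schwarz is $\sqrt{n}\,\|f\|_{\mathrm{Lip}}$-Lipschitz in Frobenius norm and hence $O(1)$-Lipschitz in the underlying standard Gaussian vector (the $\sqrt{n}$ from summing $n$ eigenvalues cancels the $1/\sqrt{n}$ entry scale, including the factor-$2$ multiplicity of off-diagonal entries against their variance $\tfrac{1}{2n}$); Borell--TIS Gaussian concentration at deviation $\Theta(n)$ then delivers the $e^{-c_1n^2}$ rate, with the mean pinned above $\alpha_1 n$ by Wigner's law in expectation. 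Notably, the paper itself remarks that the theorem ``can be derived from known large deviations \emph{or} concentration of measure results'' \cite{Alice97}, \cite{Guionnet00}, but only executes the LDP route; you have executed the concentration route. What each buys: your argument is more elementary and self-contained (only Borell--TIS and Wigner's theorem in expectation are imported), and it yields an explicit constant, roughly $c_1\sim\eta^2\delta^2/32$, whereas the paper's $c_1=\inf_B I_1$ is left implicit; the LDP route in exchange is more conceptual and applies uniformly to any weakly closed family of spectral events avoiding the semicircle law. Two small points you should make explicit: you need $\alpha_1>0$ (if $\alpha_1=0$ the statement is vacuous) so that $-\sqrt{2}<c<0$ lies strictly inside the support and $\eta=F_{sc}(c+\delta/2)-\alpha_1>0$; and the Hoffman--Wielandt step should cite the symmetric-matrix form with both spectra sorted (as in \cite{HornJohnsonBook1}, cf.\ Theorem \ref{thm:HoffmanWielandt}), which is exactly the version needed for $\sum_i|f(\lambda_i(A))-f(\lambda_i(B))|\le\|f\|_{\mathrm{Lip}}\sum_i|\lambda_i(A)-\lambda_i(B)|$.
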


\begin{proof}
Indeed Theorem \ref{thm:eigenconcen} can be derived from known large
deviations or concentration of measure results for the empirical
eigenvalue distribution of random symmetric Gaussian matrix
\cite{Alice97} \cite{Guionnet00}. Obviously, $G$ has $n$ real
eigenvalues ${(\lambda_i)}_{1\leq i\leq n}$ arranged in
nondecreasing order and its spectral measure
${\hat{\mu}}^{n}\triangleq\frac{1}{n}\sum_{i=1}^{n}\delta_{\lambda_i}=\frac{1}{n}\sum_{i=1}^{n}\delta(\lambda-\lambda_i)$,
where $\delta(\cdot)$ is the delta function. As in \cite{Alice97},
we denote the space of probability measure on $\mathbb{R}$ as
$\mathcal{M}_{1}^{+}(\mathbb{R})$ and will endow
$\mathcal{M}_{1}^{+}(\mathbb{R})$ with its usual weak topology.
\cite{Alice97} then gives the following large deviation result for
the empirical eigenvalue distribution for the matrix $G$,

\begin{theorem} [\cite{Alice97}]
Let $\mu \in \mathcal{M}_{1}^{+}(\mathbb{R})$, define the rate
function
\begin{equation}\nonumber
I_1(\mu)=\frac{1}{2}(\int{x^2}\,d\mu(x)-\Sigma(\mu))-\frac{3}{8}-\frac{1}{4}\log(2),
\end{equation}
where $\Sigma{(\mu)}$ is the non commutative entropy
\begin{equation}\nonumber
\Sigma{(\mu)}=\int{\int{\log(|x-y|)}\,d\mu(x)} \,d\mu(y).
\end{equation}

Then

\begin{itemize}
\item
  \begin{itemize}
  \item $I_1$ is well defined over the set $\mathcal{M}_{1}^{+}(\mathbb{R})$ and takes its value in $[0,+\infty)$;
  \item
    $I_1(\mu)$ is infinite as long as $\mu$ satisfies the
    following:
    \begin{itemize}
    \item $\int{x^2}\,dx=+\infty$
    \item there exists a subset $A$ of $R$ with a positive $\mu$ mass but null logarithmic capacity, i.e. a set $A$ such that $\mu(A)>0$ and
        \begin{equation*}
      \hspace{-.65in}  \noindent \gamma(A)=\exp\{-\inf_{\nu \in \mathcal{M}_{1}^{+}(\mathbb{R})}\int{\int{\log(\frac{1}{|x-y|})}\,d\nu(x)} \,d\nu(y) \}=0
        \end{equation*}
    \end{itemize}

    \item $I_1{(\mu)}$ is a good rate function, namely $\{I_1{(\mu)}\leq M\}$ is a compact subset of $\mathcal{M}_{1}^{+}(\mathbb{R})$ for $M\geq 0$.

    \item $I_1$ is a convex function on $\mathcal{M}_{1}^{+}(\mathbb{R})$.
    \item $I_1$ achieves its minimum value at a unique probability measure on $\mathbb{R}$ which is described by the Wigner's Semicircle Law.
 \end{itemize}

\item The law of the spectral measure ${\hat{\mu}}^{n}=\frac{1}{n}\sum_{i=1}^{n}\delta_{\lambda_i} $ satisfies a full large deviation principle with good rate function $I_1$ and in the scales $n^2$, that is,
    for any open subset $O$ of
    $\mathcal{M}_{1}^{+}(\mathbb{R})$,
\begin{equation*}
\liminf_{n \rightarrow \infty}\frac{1}{n^2} \log(P({\hat{\mu}}^{n} \in O)) \geq -\inf_{O}I_1
\end{equation*}
    for any closed subset $F$ of of
    $\mathcal{M}_{1}^{+}(\mathbb{R})$,
\begin{equation*}
\limsup_{n \rightarrow \infty}\frac{1}{n^2} \log(P({\hat{\mu}}^{n} \in F)) \leq -\inf_{F}I_1
\end{equation*}
\end{itemize}
\label{them:AliceLDP}
\end{theorem}

We take $c$ as in the statement of Theorem \ref{thm:eigenconcen} and
then the set of spectral measure $A$ satisfying the statement of
Theorem \ref{thm:eigenconcen} can be denoted by
$\{\frac{1}{n}\sum_{i=1}^{n} \mathbf{1}_{\lambda_i \leq
c+\delta}>\alpha_1\}$, whose complement is then
$\{\frac{1}{n}\sum_{i=1}^n \mathbf{1}_{\lambda_i\leq c+\delta}\leq
\alpha_1\}$.

Now we take a continuous function $f$ equal to $\mathbf{1}_{x\leq
c}$ over the region $(-\infty, c]$, equal to $0$ on
$[c+\delta,+\infty)$, and linear in between over the region
$[c,c+\delta]$. Then the set $A$ is included in the following set of
probability measure
\begin{eqnarray*}
\{\frac{1}{n}\sum_{i=1}^n f(\lambda_i)\leq \alpha_1\}=\{\hat{\mu}^n(f)\leq \alpha_1\}\subseteq \{\mu(f)\leq \alpha_1\}\triangleq B(\mu),
\end{eqnarray*}
with $\hat{\mu}^n=\frac{1}{n} \sum_{i=1}^{n}{\delta_{\lambda_i}}$
and $\mu(f)$ as the integral of $f$ over $\mu$.

This set $B$ is closed for the weak topology and so we can apply the
large deviation principle as in \cite{Alice97}. To get that
\begin{equation*}
\limsup_{n\rightarrow\infty}\frac{1}{n^2}\log P(\{\frac{1}{n}\sum_{i=1}^n
\mathbf{1}_{\lambda_i\leq c+\delta}\leq \alpha_1\})
\leq -\inf_{B}I
\end{equation*}
with $I$ as defined in Theorem \ref{them:AliceLDP}, from the
definition of $\alpha_1$, we simply know that the semi-circle law
does not belong to the set $B$ and so we can conclude that
$\inf_{B}I>0$. This is because the rate function $I$ is a good
rate function which achieves its unique minimum at the semicircle
law.

\end{proof}

Following Theorem \ref{thm:eigenconcen}, we know that with
overwhelming probability, the symmetric matrix generated from a
single point on the $\epsilon$-net will be very likely to have a
large number (proportional to $n$) of negative eigenvalues. In
Section \ref{sec:netanalysis}, we will show how to synthesize the
results for isolated points so that we can prove the eigenvalue
claim for the null space of the linear operator $\cA$.

\subsection{Concentration for the Null Space: $\epsilon$-net
Analysis} \label{sec:netanalysis} Building on the concentration
results for the single point on the $\epsilon$-net, we now begin
proving the claims in Theorem \ref{thm:existence2} for all the
possible symmetric matrices generated from the set
\begin{equation}\nonumber
\{z~|~z=Zw, w\in \mathbb{R}^{\frac{n(n+1)}{2}-m}\},
\label{set:randmbasis}
\end{equation}
where $Z$ is a $\frac{n(n+1)}{2} \times (\frac{n(n+1)}{2}-m)$ matrix
as mentioned in Lemma \ref{lem:nulldist}.

First, we make a simple observation regarding every point $w$ on the
Euclidean sphere
\begin{equation}\nonumber
\{w~|~\|w\|_2=1, w\in \mathbb{R}^{\frac{n(n+1)}{2}-m}\}.
\end{equation}

Since $\mathcal{S}$ is an $\epsilon$-net on the sphere, we can find
a point $w_{0} \in \mathcal{S}$ with $\|w_{0}\|_2=1$ such that
$\|w-w_{0}\|_2 \leq \epsilon$. For the error term $w-w_{0}$, we can
still find a point $w_1$ on the $\epsilon$-net $\mathcal{S}$ such
that
\begin{equation*}
\|w-w_0-\|w-w_0\|_2w_1\|_2 \leq \epsilon \|w-w_0\|_2 \leq \epsilon^2.
\end{equation*}
By iterating this process, we get that any $w$ on the unit Euclidean
sphere can be expressed as
\begin{equation}
w=w_0+\sum_{i=1}^{\infty}t_iw_i,
\label{eq:enetsum}
\end{equation}
where $|t_i|\leq \epsilon^i$ for $i \geq 1$ and $w_i \in
\mathcal{S}$ for $i\geq 0$.

Before we proceed further to look at the spectrum of the symmetric
matrix $B_w$ generated from $Zw$, we state the following theorem by
Hoffmann and Wielandt \cite{HornJohnsonBook1}.

\begin{theorem}[\cite{HornJohnsonBook1}]
Let $A$, $E \in M_{n}$, assume that $A$ and $A+E$ are both normal,
let $\lambda_1, ..., \lambda_n$ be the eigenvalues of $A$ in some
given order, and let $\hat{\lambda}_{1}, ..., \hat{\lambda}_{n}$ be
the eigenvalues of $A+E$ in some order. Then there exists a
permutation $\sigma_{i}$ of the integers $1$, $2$, ..., $n$ such
that
\begin{equation*}
\left[\sum_{i=1}^{n}|\hat{\lambda}_{\sigma_{i}}-\lambda_i|^2\right]^{\frac{1}{2}} \leq \|E\|_2
\end{equation*}
\label{thm:HoffmanWielandt}
\end{theorem}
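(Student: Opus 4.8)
The plan is to reduce the inequality to the minimization of a linear functional over the set of doubly stochastic matrices and then invoke Birkhoff's theorem; here $\|\cdot\|_2$ denotes the Frobenius (Hilbert--Schmidt) norm, which is unitarily invariant. Since $A$ and $A+E$ are normal, the spectral theorem lets me write $A=U\Lambda U^*$ and $A+E=VMV^*$ with $U,V$ unitary, $\Lambda=\mathrm{diag}(\lambda_1,\dots,\lambda_n)$ and $M=\mathrm{diag}(\hat\lambda_1,\dots,\hat\lambda_n)$. First I would exploit unitary invariance of the Frobenius norm to write $\|E\|_2=\|VMV^*-U\Lambda U^*\|_2=\|WMW^*-\Lambda\|_2$, where $W=U^*V$ is unitary. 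This is exactly the step where the normality hypothesis is essential: it is what guarantees that both matrices admit unitary diagonalizations and hence that this reduction is available.

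Next I would expand the squared norm via $\|WMW^*-\Lambda\|_2^2=\trace\big[(WMW^*-\Lambda)^*(WMW^*-\Lambda)\big]$, using $W^*W=I$ and the cyclic property of the trace. A direct computation yields
\begin{equation*}
\|E\|_2^2=\sum_i|\hat\lambda_i|^2+\sum_i|\lambda_i|^2-2\,\Re\sum_{i,j}|w_{ij}|^2\,\bar\lambda_i\hat\lambda_j .
\end{equation*}
Setting $d_{ij}=|w_{ij}|^2$, the unitarity of $W$ forces $\sum_j d_{ij}=1$ and $\sum_i d_{ij}=1$, so the matrix $D=(d_{ij})$ is doubly stochastic. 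The crucial algebraic observation is that the same expansion rearranges exactly into
\begin{equation*}
\|E\|_2^2=\sum_{i,j}d_{ij}\,|\lambda_i-\hat\lambda_j|^2 ,
\end{equation*}
where the fact that the rows and columns of $D$ each sum to one is precisely what makes the $|\lambda_i|^2$ and $|\hat\lambda_j|^2$ contributions reassemble into $\sum_i|\lambda_i|^2+\sum_j|\hat\lambda_j|^2$.

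Finally I would regard $f(D)=\sum_{i,j}d_{ij}|\lambda_i-\hat\lambda_j|^2$ as a linear functional of $D$ on the Birkhoff polytope of doubly stochastic matrices. By Birkhoff's theorem this polytope is the convex hull of the $n\times n$ permutation matrices, so a linear functional attains its minimum at a vertex, that is, at the permutation matrix associated with some $\sigma$. Hence $\|E\|_2^2=f(D)\geq f(P_\sigma)=\sum_i|\lambda_i-\hat\lambda_{\sigma(i)}|^2$, which is the claimed bound after taking square roots. The conceptual heart of the argument, and the step I expect to require the most care, is the middle one: recognizing that unitarity produces a doubly stochastic weighting matrix $D$ and that the Frobenius norm rewrites \emph{exactly} as a $D$-weighted average of the squared eigenvalue discrepancies. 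Once that identity is in hand, unitary invariance and Birkhoff's theorem are standard and the conclusion follows immediately.
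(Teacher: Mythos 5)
The paper itself gives no proof of this statement: it is the classical Hoffman--Wielandt theorem, imported by citation from Horn and Johnson's \emph{Matrix Analysis}. Your argument is correct and is essentially the standard proof found in that reference --- unitary diagonalization of both normal matrices, reduction via unitary invariance of the Frobenius norm to $\|WMW^*-\Lambda\|_2$, the observation that $D=(|w_{ij}|^2)$ is doubly stochastic, the exact rewriting of the norm as $\sum_{i,j}d_{ij}|\lambda_i-\hat\lambda_j|^2$, and minimization of this linear functional over the Birkhoff polytope at a permutation-matrix vertex.
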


Now we can give a closer study of the $n \times n$ symmetric matrix
$B_w$ generated from $\frac{1}{\sqrt{n}}Zw$. From the $\epsilon$-net
decomposition (\ref{eq:enetsum}), it follows that
\begin{equation}\nonumber
B_w=B_{w_0}+\sum_{i=1}^{n}{t_iB_{w_i}},
\end{equation}
where $B_{w_i}$ is the symmetric matrix generated from
$\frac{1}{\sqrt{n}}Zw_i$ for $i \geq 0$.

Since we can thus view $B_w$ as $B_{w_0}$ plus some perturbation,
using Theorem \ref{thm:HoffmanWielandt}, there exists a permutation
$\sigma_{i}$ of the integers $1$, $2$, ..., $n$ such that
\begin{equation}
\left[\sum_{i=1}^{n}|\hat{\lambda}_{\sigma_{i}}-\lambda_i|^2\right]^{\frac{1}{2}} \leq \|\sum_{i=1}^{n}{t_iB_{w_i}}\|_2,
\label{eq:hoff1}
\end{equation}
where $\hat{\lambda}_{i}$, $1\leq i \leq n$, and $\lambda_{i}$,
$1\leq i \leq n$, are the eigenvalues of the $B_{w}$  and
$B_{w_{0}}$ arranged in an increasing order, respectively.

But from the triangular inequality, we know
\begin{eqnarray}
\|\sum_{i=1}^{n}{t_iB_{w_i}}\|_2 &\leq& \sum_{i=1}^{n}|t_i|\|B_{w_i}\|_2\nonumber \\
&\leq& \sum_{i=1}^{n}\epsilon^i C_1\sqrt{n} \leq \frac{\epsilon C_1\sqrt{n}}{1-\epsilon},
\label{eq:hoff2}
\end{eqnarray}
where we use the fact (derivations omitted) that with overwhelmingly
high probability (the complement probability exponent in the scale
of $-n^2$) as $n \rightarrow \infty$, $\|B_{w}\|_2$ is upper bounded
by $C_1\sqrt{n}$ \emph{simultaneously} for all $w \in \mathcal{S}$
with $C_1$ as a constant independent of $n$.

Now we can officially argue that the number, say $k$, of negative
eigenvalues of $B_w$ can not be small. In particular, we will upper
bound $(\alpha_1 n-k)$, where $\alpha_1$ is as defined in Theorem
\ref{thm:eigenconcen} for $B_{w_0}$. By picking $c$ to be negative
and $\delta$ to be small enough in Theorem \ref{thm:eigenconcen},
$c+\delta$ will be negative. Then for whatever ordering of the
eigenvalues of $B_{w}$, we have
\begin{equation}
\left[\sum_{i=1}^{n}|\hat{\lambda}_{\sigma_{i}}-\lambda_i|^2\right] \geq (\alpha_1 n-k)|c+\delta|^2,
\label{eq:hoff3}
\end{equation}
because at least $(\alpha_1 n-k)$ negative eigenvalues (smaller than
$c+\delta$) of $B_{w_{0}}$ will be matched to nonnegative
eigenvalues of $B_{w}$ in Theorem \ref{thm:HoffmanWielandt}.

Connecting (\ref{eq:hoff1}), (\ref{eq:hoff2}) and (\ref{eq:hoff3}),
we have with overwhelming probability, simultaneously for every $w$
on the Euclidean sphere, if $(\alpha_1 n-k)\geq 0$, (otherwise $k$
already nicely bounded)
\begin{equation}\nonumber
 \sqrt{(\alpha_1 n-k)|c+\delta|^2} \leq \frac{\epsilon C_1\sqrt{n}}{1-\epsilon}.
\end{equation}
So
\begin{equation}\nonumber
k \geq \alpha_1 n-\frac{\epsilon^2 C_1^2 n}{(1-\epsilon)^2|c+\delta|^2},
\end{equation}
which implies if we pick $\epsilon$ small enough, the number of
negative eigenvalues of $B_{w}$ will be proportionally growing with
$n$.  Note that for any $\epsilon>0$,$c<0$, $\delta>0$ and $C_1>0$,
we can always find a large enough $\alpha=\frac{2m}{n(n+1)}$ to make
sure that the union bound exponent from the cardinality of the
$\epsilon$-net is overwhelmed by both the negative large deviation
exponent for the spectral measure and the negative large deviation
exponent for the Forbenius norm of the random matrix.  In summary,
we have arrived at a complete proof of Theorem \ref{thm:existence2}.

\section{Simulation}\label{sec:simulation}
\begin{figure*}
\centering
\begin{tabular}{c c}
\includegraphics[width=0.5\linewidth]{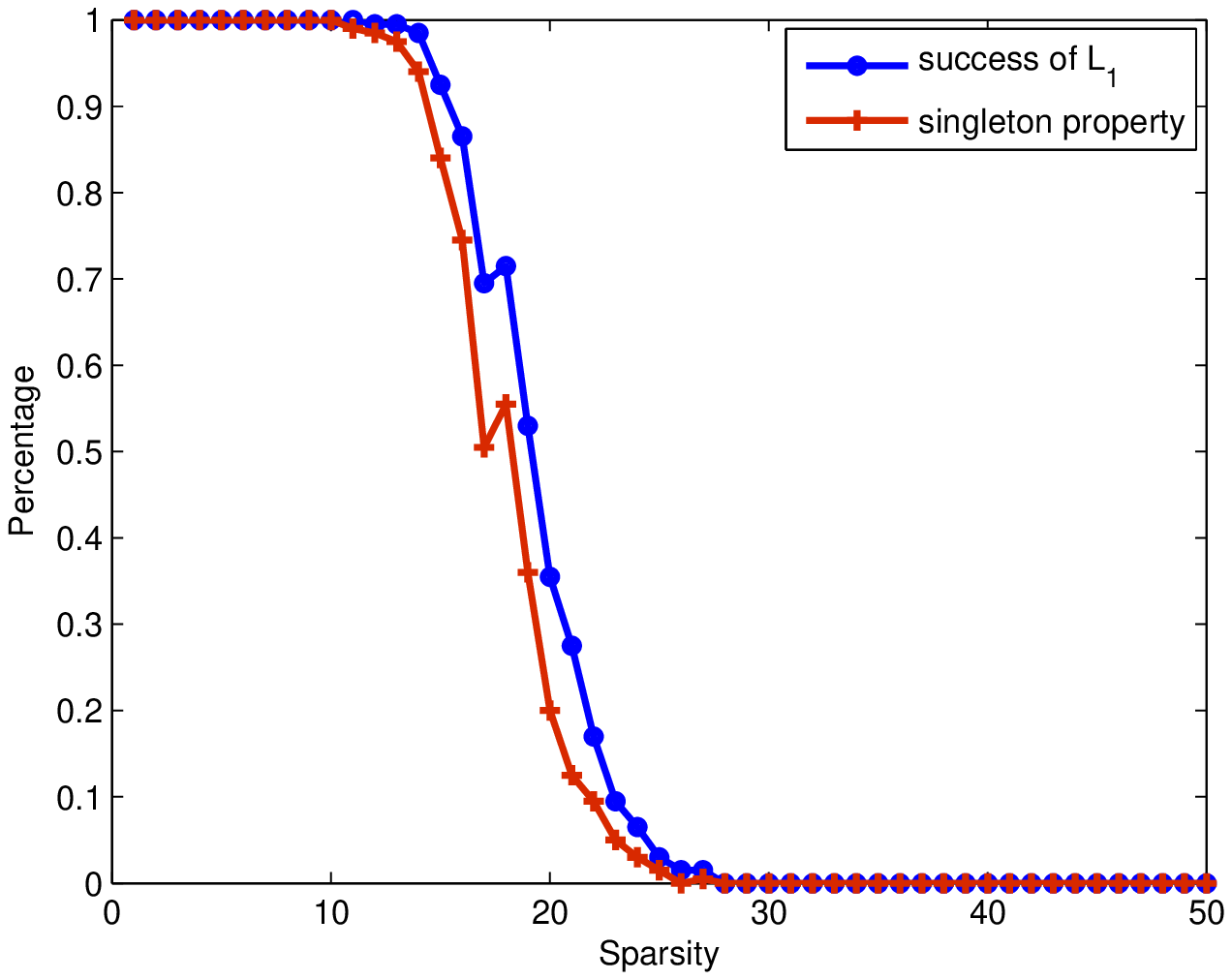}
&
\includegraphics[width=0.5\linewidth]{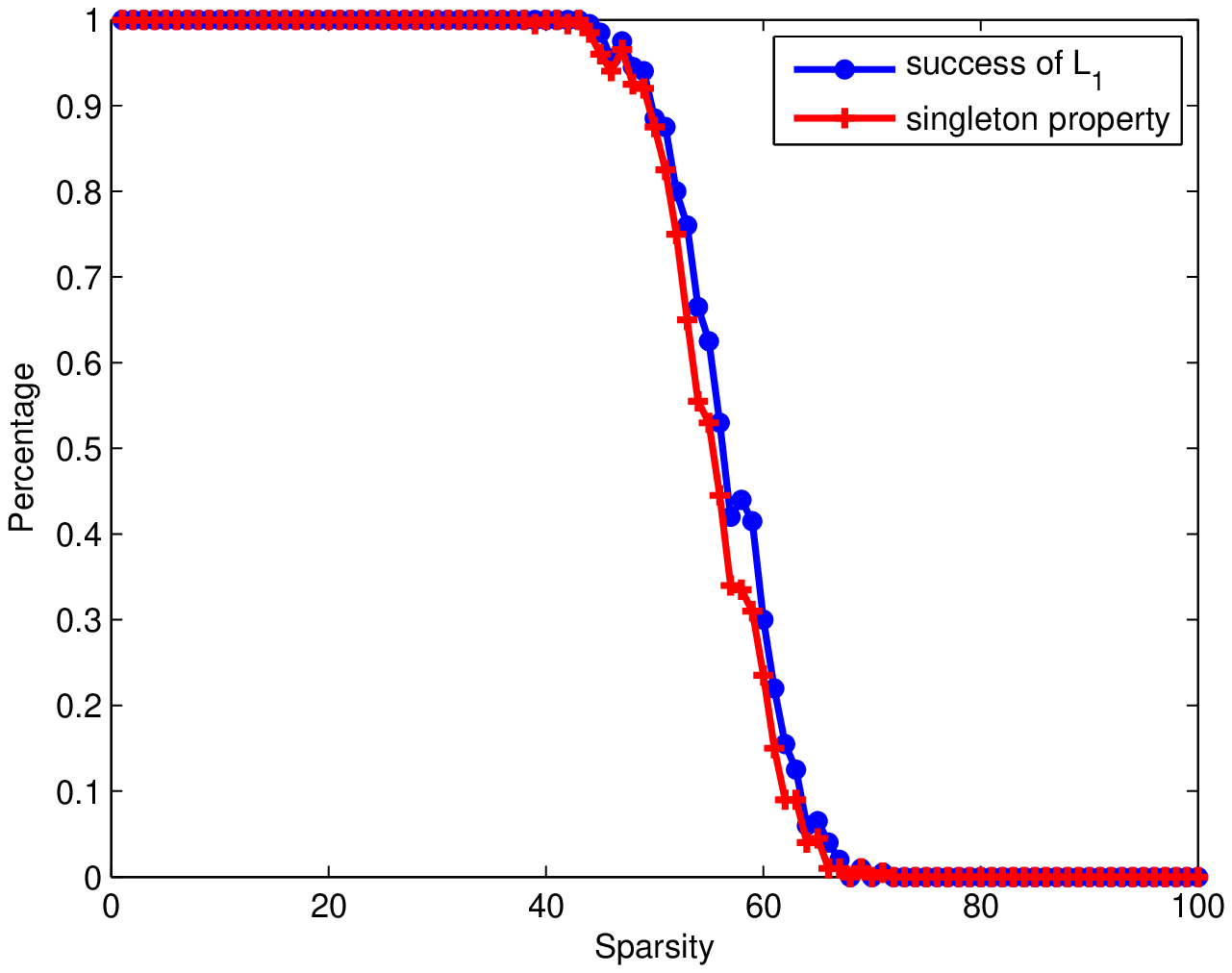}
\\
{\small (a) 50 $\times$ 200 0-1 matrix }& (b){\small 100 $\times$ 200 0-1 matrix}\\
\end{tabular}
      \caption{Comparison of $L_1$ recovery and singleton property for (a) 50 $\times$ 200 0-1 matrix and (b) 100 $\times$ 200 0-1 matrix }
      \label{fig:result}
   \end{figure*}

\begin{figure*}
\centering
\begin{tabular}{c c}
\includegraphics[width=0.5\linewidth]{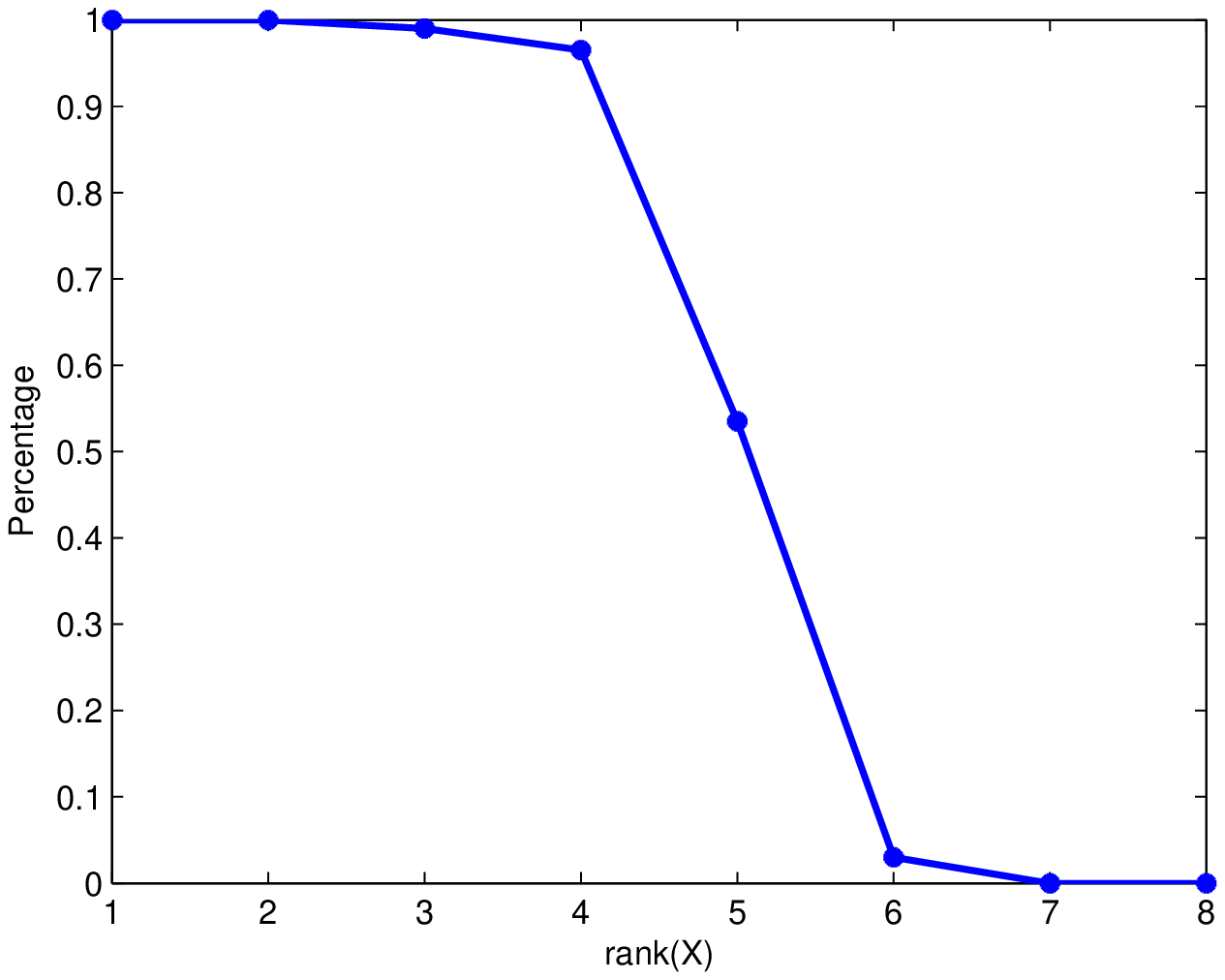}
&
\includegraphics[width=0.5\linewidth]{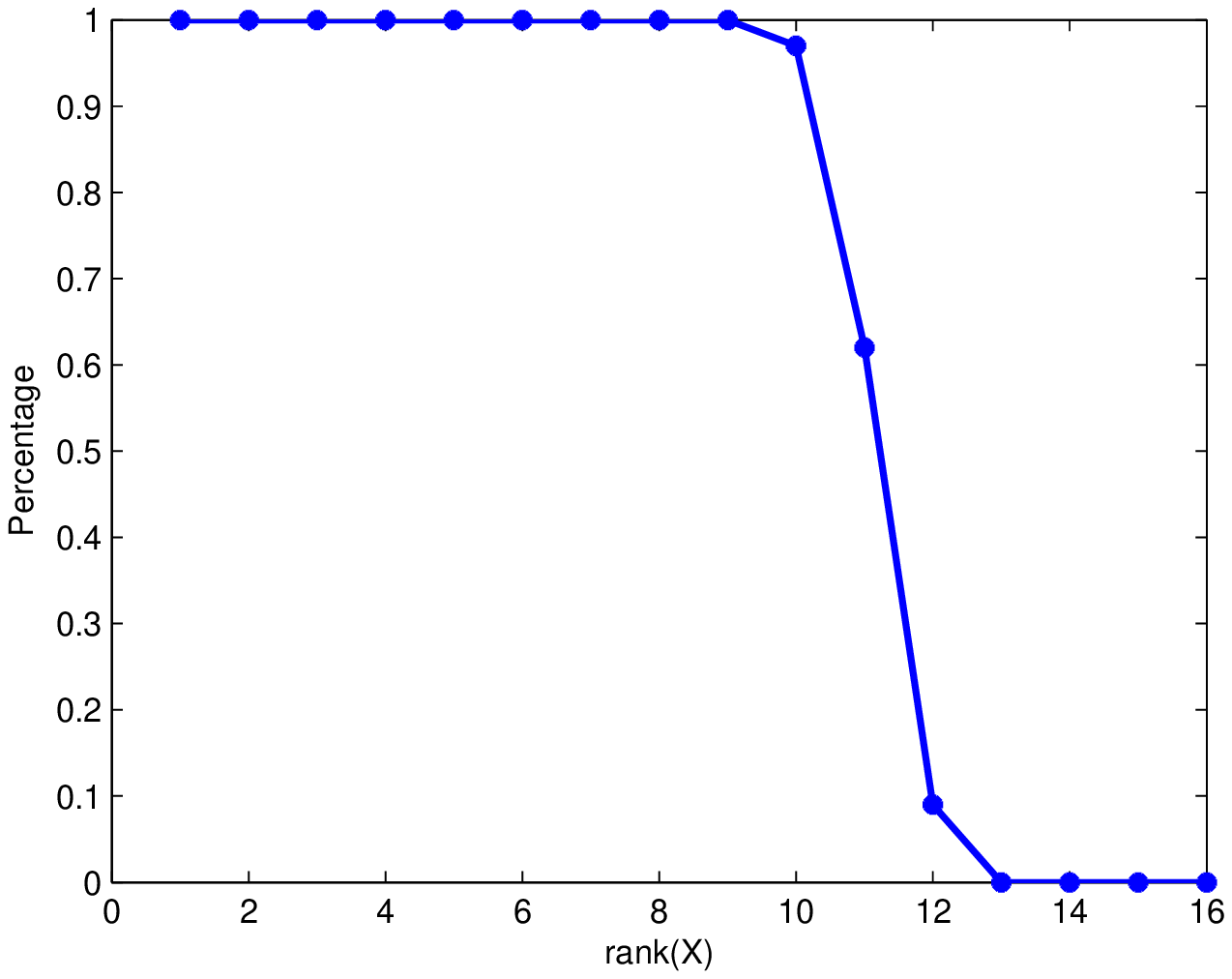}
\\
(a) {\small $m=500$ }& (b){\small $m=600$}
\end{tabular}
\caption{System of $m$ measurements admitting a unique $40 \times
40$ semidefinite matrix solution (a) $m=500$ (b) $m=600$}
      \label{fig:gaussian}

   \end{figure*}


 In the vector case, we generate a random 0-1 matrix $A^{m \times
 n}$
with i.i.d. entries and empirically study the uniqueness property
and the success of $L_1$ minimization for nonnegative vectors with
different sparsity. Each entry of $A$ takes value 1 with probability
0.2 and value 0 with probability 0.8. The size of $A$ is 50 $\times$
200 and 100 $\times$ 200 respectively. For a sparsity $k$, we select
a support set $S$ with size $|S|=k$ uniformly at random, and
generate a nonnegative vector $x_0$ on $S$ with i.i.d. entries
uniformly on the unit interval. Then we check whether $U\triangleq
\{x ~|~Ax=Ax_0, x\geq 0\}$ is singleton. This can be realized as
follows. We minimize and maximize the same objective function $d^Tx$
over $U$, where $d$ is a random vector in $\mathbb{R}^n$. Note that
if $U$ is not a singleton, then the set $\{ d \in \mathbb{R}^n~|~
d^Tx=d^Tx_0, \forall x \in U \}$ has measure 0. Thus the probability
that the minimizer and the maximizer are the same when $U$ is not a
singleton is 0. We generate several different $d$'s and claim $U$ to
be singleton if the minimizer and the maximizer
are the same for every $d$. 
For each instance, we also check whether $L_1$ minimization can
recover $x_0$ from $Ax_0$ or not. Under a given sparsity $k$, we
generate 200 $x_0$'s and repeat the above procedure 200 times.

We fix $n=200$, and $m$ is 50 in Fig. \ref{fig:result}(a) and 100 in
Fig. \ref{fig:result}(b). When $\frac{m}{n}$ increases from
$\frac{1}{4}$ to $\frac{1}{2}$, the support size of a sparse vector
which is a unique nonnegative solution increases from $0.05n$ to
$0.19n$. Note that when $\frac{m}{n}=\frac{1}{2}$, for this 0-1
matrix, the singleton property still exists linearly in $n$, while
for a random Gaussian matrix, with overwhelming probability no
vector can be a unique nonnegative solution. Besides, the thresholds
where the singleton property breaks down and where the fully
recovery of $L_1$ minimization breaks down are quite close.

In the matrix case, we generate a $40 \times 40$ matrix $G$ such
that all the elements are i.i.d. $N(0,1)$, then
$A=\frac{1}{2}(G+G^T)$ has its diagonal elements distributed as
$N(0,1)$ and off-diagonal elements distributed as
$N(0,\frac{1}{2})$. We generate $m$ such matrices $A_i$'s as the
linear operator $\cA$, $m$ is 500 and 600 respectively for
comparison. $X$ is a low-rank positive semidefinite symmetric
matrix. We increase the rank of $X$ from 0 to $0.4n$, and for each
fixed rank, generate 200 $X$'s randomly. For each $X$, we minimize
and maximize the same objective function $\langle D,X'\rangle$ over
the set $V\triangleq\{X'~|~ \cA(X')=\cA(X), X'\succeq 0, X'\in
S^n\}$, where $D$ is random matrix with i.i.d. $N(0,1)$ entries.
Similarly to the vector case, if $V$ is not a singleton, then the
set $\{ D ~|~ \langle D,X'\rangle=\langle D,X\rangle, \forall X' \in
V \}$ has measure 0. Thus the probability that the minimizer and the
maximizer are the same when $V$ is not a singleton is 0. We generate
several different $D$'s and claim the set $V$ to be a singleton if
the minimizer and the maximizer of $\langle D,X'\rangle$ from the
set $\{X'~|~ \cA(X')=\cA(X), X'\succeq 0, X'\in S^n\}$ are the same
for every $D$. As indicated by Fig. \ref{fig:gaussian}, when
$m=500$, the singleton property holds if $\rank(X)$ is at most 2,
which is $0.05n$. When $m$ increases to 600, the singleton property
holds if $\rank(X)$ is at most 8, which is $0.2n$.


\section{Conclusion}\label{sec:conclusion}
This paper studies the phenomenon that an underdetermined system
admits a unique nonnegative vector solution or a unique positive
semidefinite matrix solution. This uniqueness property can potentially lead to more
efficient sparse recovery algorithms. 
We show that only for a class of matrices with a row span intersecting the
positive orthant that $\{x ~|~Ax=Ax_0, x\geq 0\}$ could possibly be
a singleton if $x_0$ is sparse enough. Among these matrices, we are
interested in 0-1 matrices which fit the setup of network inference
problems.
For Bernoulli 0-1 matrices, we prove that with high probability the
unique solution property holds for all $k$-sparse nonnegative
vectors where $k$ is $O(n)$, instead of the previous result
$O(\sqrt{n})$. For
the adjacency matrix of a general expander, 
the same phenomenon exists 
and we further provide a closed-form constant ratio of $k$ to $n$.
One future direction is to obtain uniqueness property threshold for
a given measurement
matrix. 

For the matrix case, we develop a necessary and sufficient condition
for a linear compressed operator to admit a unique feasible positive
semidefinite matrix solution. We further show that this condition
will be satisfied with overwhelmingly high probability for a
randomly generated Gaussian linear compressed operator with vastly
different approaches from those used in vector case.  
Computing explicitly the threshold $\xi$ as a function of $\alpha$,
for the uniqueness property to happen will be one part of future
works.



\bibliographystyle{IEEEtranS}

\begin{thebibliography}{10}
\providecommand{\url}[1]{#1} \csname url@samestyle\endcsname
\providecommand{\newblock}{\relax} \providecommand{\bibinfo}[2]{#2}
\providecommand{\BIBentrySTDinterwordspacing}{\spaceskip=0pt\relax}
\providecommand{\BIBentryALTinterwordstretchfactor}{4}
\providecommand{\BIBentryALTinterwordspacing}{\spaceskip=\fontdimen2\font
plus \BIBentryALTinterwordstretchfactor\fontdimen3\font minus
  \fontdimen4\font\relax}
\providecommand{\BIBforeignlanguage}[2]{{%
\expandafter\ifx\csname l@#1\endcsname\relax
\typeout{** WARNING: IEEEtranS.bst: No hyphenation pattern has been}%
\typeout{** loaded for the language `#1'. Using the pattern for}%
\typeout{** the default language instead.}%
\else \language=\csname l@#1\endcsname \fi #2}}
\providecommand{\BIBdecl}{\relax} \BIBdecl

\bibitem{Alice97}
G.~B. Arous and A.~Guionnet, ``Large deviations for {W}igner's law
and
  {V}oiculescu's non-commutative entropy,'' \emph{Probability Theory and
  Related Fields}, vol. 108, no.~4, pp. 517--542, 1997.

\bibitem{Bar07}
R.~Baraniuk, ``Compressive sensing,'' \emph{IEEE Signal Process.
Mag.},
  vol.~24, no.~4, pp. 118--121, July 2007.

\bibitem{BGIKS08}
R.~Berinde, A.~Gilbert, P.~Indyk, H.~Karloff, and M.~Strauss.,
``Combining
  geometry and combinatorics: a unified approach to sparse signal recovery,''
  \emph{Preprint}, 2008.

\bibitem{BeI08}
R.~Berinde and P.~Indyk, ``Sparse recovery using sparse random
matrices,''
  \emph{MIT-CSAIL Technical Report}, 2008.

\bibitem{BEZ08}
A.~Bruckstein, M.~Elad, and M.~Zibulevsky, ``On the uniqueness of
nonnegative
  sparse solutions to underdetermined systems of equations,'' \emph{IEEE Trans.
  Inf. Theory}, vol.~54, no.~11, pp. 4813--4820, Nov. 2008.

\bibitem{Can08}
E.~J. Cand\`{e}s, ``The restricted isometry property and its
implications for
  compressed sensing,'' \emph{Compte Rendus de l'Academie des Sciences}, pp.
  589--592, 2008.

\bibitem{CaT05}
E.~Cand\`{e}s and T.~Tao, ``Decoding by linear programming,''
\emph{IEEE Trans.
  Inf. Theory}, vol.~51, no.~12, pp. 4203--4215, Dec. 2005.

\bibitem{CaT06}
------, ``Near-optimal signal recovery from random projections: Universal
  encoding strategies?'' \emph{IEEE Trans. Inf. Theory}, vol.~52, no.~12, pp.
  5406--5425, Dec. 2006.

\bibitem{Can06}
E.~Cand\`{e}s, ``Compressive sampling,'' in \emph{Proc.
International Congress
  of Mathematics}, 2006, pp. 1433--1452.

\bibitem{CRVW02}
M.~Capalbo, O.~Reingold, S.~Vadhan, and A.~Wigderson, ``Randomness
{C}onductors
  and {C}onstant degree {E}xpansions beyond the degree 2 {B}arrier,'' in
  \emph{Proc. ACM STOC}, 2002, pp. 659--668.

\bibitem{CHNY02}
A.~Coates, A.~Hero~III, R.~Nowak, and B.~Yu, ``Internet
tomography,''
  \emph{IEEE Signal Process. Mag.}, vol.~19, no.~3, pp. 47--65, May 2002.

\bibitem{DoT05}
D.~L. Donoho and J.~Tanner, ``Sparse nonnegative solution of
underdetermined
  linear equations by linear programming,'' in \emph{Proc. Natl. Acad. Sci.
  U.S.A.}, vol. 102, no.~27, 2005, pp. 9446--9451.

\bibitem{DoT08}
------, ``Counting the faces of randomly-projected hypercubes and orthants,
  with applications,'' \emph{Discrete Comput. Geom.}, 2008.

\bibitem{Don06}
D.~Donoho, ``Compressed sensing,'' \emph{IEEE Trans. Inf. Theory},
vol.~52,
  no.~4, pp. 1289--1306, April 2006.

\bibitem{Duf06}
N.~Duffied, ``Network tomography of binary network performance
  characteristics,'' \emph{IEEE Trans. Inf. Theory}, vol.~52, no.~12, pp.
  5373--5388, Dec. 2006.

\bibitem{FazelThesis}
M.~Fazel, ``Matrix rank minimization with applications,'' Ph.D.
dissertation,
  Stanford University, 2002.

\bibitem{FMSSW07}
J.~Feldman, T.~Malkin, R.~A. Servedio, C.~Stein, and M.~J.
Wainwright, ``{LP}
  decoding corrects a constant fraction of errors,'' \emph{IEEE Trans. Inf.
  Theory}, vol.~53, no.~1, pp. 82--89, 2007.

\bibitem{Guionnet00}
A.~Guionnet and O.Zeitouni, ``Concentration of the spherical measure
for large
  matrices,'' \emph{Elect. Commu. in Probab.}, pp. 119--136, 2000.

\bibitem{HornJohnsonBook1}
R.~A. Horn and C.~R. Johnson, \emph{Matrix Analysis}.\hskip 1em plus
0.5em
  minus 0.4em\relax New York: Cambridge University Press, 1985.

\bibitem{KDXH09}
M.~A. Khajehnejad, A.~G. Dimakis, W.~Xu, and B.~Hassibi, ``Sparse
recovery of
  positive signals with minimal expansion,'' \emph{Preprint}, 2009.

\bibitem{Ledoux00}
M.~Ledoux, \emph{The Concentration of Measure Phenomenon}.\hskip 1em
plus 0.5em
  minus 0.4em\relax Providence: American Mathematical Society, 2000.

\bibitem{LMSS98}
M.~G. Luby, M.~Mitzenmacher, M.~A. Shokrollahi, and D.~A. Spielman,
``Analysis
  of low density codes and improved designs using irregular graphs,'' in
  \emph{Proc. ACM STOC}, 1998, pp. 249--258.

\bibitem{NgT06}
H.~Nguyen and P.~Thiran, ``Using end-to-end data to infer lossy
links in sensor
  networks,'' in \emph{Proc. IEEE Infocom}, Apr. 2006, pp. 1--12.

\bibitem{NgT07}
H.~X. Nguyen and P.~Thiran, ``Network loss inference with second
order
  statistics of end-to-end flows,'' in \emph{Proc. ACM SIGCOMM}, 2007, pp.
  227--240.

\bibitem{Recht07}
B.~Recht, M.~Fazel, and P.~Parrilo, ``Guaranteed minimum rank
solutions of
  matrix equations via nuclear norm minimization,'' {S}ubmitted. Preprint
  Available at \url{http://www.ist.caltech.edu/~brecht/publications.html}.

\bibitem{RechtCDC08}
B.~Recht, W.~Xu, and B.~Hassibi, ``Necessary and sufficient
conditions for
  success of the nuclear norm heuristic for rank minimization,'' in \emph{Proc.
  IEEE CDC}, 2008.

\bibitem{SiS96}
M.~Sipser and D.~Spielman, ``Expander codes,'' \emph{IEEE Trans.
Inf. Theory},
  vol.~42, no.~6, pp. 1710--1722, 1996.

\bibitem{Wendel62}
J.~Wendel, ``A problem in geometric probability,'' \emph{Mathematica
  Scandinavica}, no.~11, pp. 109--111, 1962.

\bibitem{XuH07}
W.~Xu and B.~Hassibi, ``Efficient compressive sensing with
deterministic
  guarantees using expander graphs,'' in \emph{Information Theory Workshop,
  2007. ITW '07. IEEE}, Sept. 2007, pp. 414--419.

\bibitem{ZRLD03}
Y.~Zhang, M.~Roughan, C.~Lund, and D.~Donoho, ``An
information-theoretic
  approach to traffic matrix estimation,'' in \emph{Proc. ACM SIGCOMM}, Aug.
  2003, pp. 301--312.

\end{thebibliography}

\end{document}